\newcommand{\shortversion}[1]{}
\newcommand{\longversion}[1]{#1}

\shortversion{
  \documentclass{llncs} 
}
\longversion{
  \documentclass[10pt,usletter]{article}
  \usepackage[totalwidth=450pt,totalheight=640pt]{geometry}
  \date{}
  \usepackage{amsthm}
}

\usepackage{amsfonts}
\usepackage{graphicx,tikz}
\usetikzlibrary{arrows}
\usepackage{xcolor}
\usepackage{boxedminipage}

\usepackage{amssymb,amsmath}

\newtheorem{LEM}{Lemma} 
\newtheorem{THE}{Theorem}

\newtheorem{PRO}{Proposition} 
 
\newtheorem{DEF}{Definition}

\newcommand{\defparproblem}[4]{
  \vspace{1mm}
\noindent\fbox{
  \begin{minipage}{0.96\textwidth}
  \begin{tabular*}{\textwidth}{@{\extracolsep{\fill}}lr} #1  & {\bf{Parameter:}} #3 \\ \end{tabular*}
  {\bf{Input:}} #2  \\
  {\bf{Question:}} #4
  \end{minipage}
  }
  \vspace{1mm}
}

\newcommand{\defproblem}[3]{
  \vspace{1mm}
\noindent\fbox{
  \begin{minipage}{0.96\textwidth}
  \begin{tabular*}{\textwidth}{@{\extracolsep{\fill}}lr} #1  &  \\ \end{tabular*}
  {\bf{Input:}} #2  \\
  {\bf{Question:}} #3
  \end{minipage}
  }
  \vspace{1mm}
}

\shortversion{
  \title{Parameterized Algorithms for Modular-Width}
  \author{Jakub Gajarsk\'{y}\inst{1}\thanks{Research funded by the Czech Science Foundation under grant
P202/11/0196} \and Michael Lampis\inst{2} \and Sebastian Ordyniak\inst{1}\thanks{Research funded by 
      Employment of Newly Graduated Doctors of Science for
      Scientific Excellence (CZ.1.07/2.3.00/30.0009).}
  }
  \institute{
    Faculty of Informatics, Masaryk University, 
    Brno, Czech Republic, \\
    \email{\{gajarsky,ordyniak\}@fi.muni.cz}. 
    \and
    KTH Royal Institute of Technology,
    Stockholm, Sweden\\
    \email{mlampis@kth.se}
  }
}
\longversion{
  \title{Parameterized Algorithms for Modular-Width}
  \author{Jakub Gajarsk\'{y}$^1$\thanks{Research funded by the Czech Science Foundation under grant
P202/11/0196} \quad Michael Lampis${}^2$ \quad Sebastian Ordyniak$^{1}$\thanks{Research funded by 
      Employment of Newly Graduated Doctors of Science for
      Scientific Excellence (CZ.1.07/2.3.00/30.0009).}\\[8pt]
    \small ${}^1$ Faculty of Informatics, Masaryk University, Brno, Czech Republic\\[-3pt]
    \small \texttt{\{gajarsky,ordyniak\}@fi.muni.cz}\\
    \small ${}^2$ KTH Royal Institute of Technology, Stockholm, Sweden\\[-3pt]
    \small \texttt{mlampis@kth.se}\\
  }
}

\usepackage{boxedminipage}
\usepackage{boxedminipage}
\usepackage{xcolor}
\usepackage{framed}

\newcommand{\FPT}{\text{\normalfont FPT}}

\newcommand{\W}[1][xxxx]{\text{\normalfont W[#1]}}
\newcommand{\N}{\mathbb{N}}

\newcommand{\vc}{\textup{vc}}
\newcommand{\nd}{\textup{nd}}
\newcommand{\tc}{\textup{tc}}
\newcommand{\sd}{\textup{sd}}
\newcommand{\mw}{\textup{mw}}
\newcommand{\tw}{\textup{tw}}
\newcommand{\cw}{\textup{cw}}

\newcommand{\SB}{\{\,}
\newcommand{\SM}{\;{:}\;}
\newcommand{\SE}{\,\}}
\newcommand{\Card}[1]{|#1|}

% general makros

% 

%

\newcommand{\ham}{\textup{ham}}

\newcommand{\ILP}{\textup{ILP}}
\newcommand{\INTEGER}{\mathcal{Z}}

\def\TM#1#2{{\cal T\!M}_{#2}(#1)}

\begin{document}

\maketitle

\begin{abstract}

It is known that a number of natural graph problems which are FPT parameterized
by treewidth become W-hard when parameterized by clique-width. It is therefore
desirable to find a different structural graph parameter which is as general as
possible, covers dense graphs but does not incur such a heavy algorithmic
penalty.

The main contribution of this paper is to consider a parameter called
modular-width, defined using the well-known notion of modular decompositions.
Using a combination of ILPs and dynamic programming we manage to design FPT
algorithms for Coloring and Partitioning into paths (and hence Hamiltonian path and Hamiltonian cycle), which are W-hard for both clique-width and its
recently introduced restriction, shrub-depth. We thus argue that modular-width
occupies a sweet spot as a graph parameter, generalizing several simpler
notions on dense graphs but still evading the ``price of generality'' paid by
clique-width.

\end{abstract}

\section{Introduction}

The topic of this paper is the exploration of the algorithmic properties of
some structural graph parameters. This area is typically dominated by an effort
to achieve two competing goals: generality and algorithmic tractability. A good
example of this tension is the contrast between treewidth and clique-width. 

A large wealth of problems are known to be FPT when parameterized by treewidth
\cite{BodlaenderK08,Bodlaender06,Bodlaender00}. One drawback of treewidth,
however, is that this parameterization excludes a large number of interesting
instances, since, in particular, graphs of small treewidth are necessarily
sparse. The notion of clique-width (and its cousins rank-width \cite{Oum05} and
boolean-width \cite{Bui-XuanTV11}) tries to ameliorate this problem by covering
a significantly larger family of graphs, including many dense graphs.  As it
turns out though, the price one has to pay for this added generality is
significant. Several natural problems which are known to be fixed-parameter
tractable for treewidth become W-hard when parameterized by these measures
\cite{FominGLS10,FominGLS10b,FominGLS09}. 

It thus becomes an interesting problem to explore the trade-offs offered by
these and other graph parameters. More specifically, one may ask: is there a
natural graph parameter which covers dense graphs but still allows FPT
algorithms for the problems lost to clique-width? This is the main question
motivating this paper. We first attempt to use the recently introduced notion
of shrub-depth for this role \cite{GanianHNOMR12}. Shrub-depth is a restriction
of clique-width which shows some hope, since it has been used to obtain
improved algorithmic meta-theorems.  Unfortunately, as we will establish, 
the hardness constructions for \textsc{Coloring} and \textsc{Hamiltonian path}
used in~\cite{FominGLS10} go through with small
modifications for this restricted parameter as well.

The main contribution of this paper is then the consideration of a parameter
called modular-width which, we argue, nicely fills this niche. One way to
define modular-width is by using the standard concept of modular decompositions
(see e.g. \cite{McConnellS99}), as the maximum degree of the optimal modular
decomposition tree.  As a consequence, a graph's modular-width can be computed
in polynomial time.  Note that the concept of modular-width was already briefly
considered in \cite{CourcelleMR00}, but was then abandoned in that paper in
favor of the more general clique-width. To the best of our knowledge,
modular-width has not been considered as a parameter again, even though modular
decompositions have found a large number of algorithmic applications, including
in parameterized complexity (see \cite{habib2010survey} for a general survey
and \cite{protti2009applying,DomGHNT06,BessyPP10} for example applications in
parameterized complexity). 

We give here the first evidence indicating that modular-width is a structural
parameter worthy of further study. On the algorithmic side, modular-width
offers a significant advantage compared to clique-width, a fact we demonstrate
by giving FPT algorithms for several variants of \textsc{Hamiltonicity} and
\textsc{Chromatic number}, all problems W-hard for clique-width. At the same
time, we show that modular-width significantly generalizes several simpler
parameters, such as neighborhood diversity \cite{Lampis12} and twin-cover
\cite{Ganian11}, which also allow FPT algorithms for these problems.

Our main algorithmic tool is a form of dynamic programming on the modular
decomposition of the input graph. Unlike dynamic programming on the more
standard tree decompositions, the main obstacle here is in combining the DP
tables of the children of a node to compute the table for the node itself. This
is in general a hard problem, but we show that it can sometimes be made
tractable if every node of the decomposition has small degree, hence the
parameterization by modular-width. 

Even if the modular decomposition has small degree, combining the DP tables is
still not necessarily a trivial problem. A second idea we rely on (in the case
of \textsc{Hamiltonicity}) is to use an Integer Linear Program, whose number of
variables is bounded by the number of modules we are trying to combine. It is
our hope that this technique, which seems quite general, will be applicable to
other problems as well.

\shortversion{\paragraph{Full Version} Proofs of statements
  marked with ($\star$) are shortened or omitted due to space restrictions.
  Detailed proofs can be found in the full
  version, available at \url{arxiv.org/abs/1308.2858}.}

\section{Preliminaries}

We use standard notation from graph theory as can be found in,
e.g.,~\cite{diestel00}. Let $G$ be a graph. We denote the vertex set of $G$
by $V(G)$ and the edge set of $G$ by $E(G)$. 
Let $X \subseteq V(G)$ be a set of vertices of $G$. The \emph{subgraph of
  $G$ induced by $X$}, denoted $G[X]$, is the graph with vertex set $X$
and edges $E(G) \cap [X]^2$. By $G \setminus X$ we denote
the subgraph of $G$ induced by $V(G) \setminus X$. Similarly, for $Y
\subseteq E(G)$ we define $G \setminus Y$ to be the subgraph of $G$
obtained by deleting all edges in $Y$ from $G$. For a
graph $G$ and a vertex $v \in V(G)$, we denote by $N_G(v)$ and $N_G[v]$
the open and closed neighborhood of $v$ in $G$,
respectively. 

\subsection{Considered Problems}\label{ssec:cons-prob}

We consider the following problems on graphs.
Let $G$ be a graph. A \emph{coloring} of $G$ is a function 
$\lambda : V(G) \rightarrow \N$ such that for every edge $\{u,v\} \in
E(G)$ it holds that  $\lambda(u)\neq \lambda(v)$. We denote by
$\lambda(G)$ the set of colors used by
the coloring $\lambda$, i.e., $\lambda(G)=\SB \lambda(v) \SM v \in
V(G) \SE$, and by
$\Lambda(G)$ the set of all colorings of $G$ that use at most $|V(G)|$
colors. The \emph{chromatic number} of $G$, denoted by $\chi(G)$,
is the smallest number $c$ such that $G$ has a coloring $\lambda$ with
$|\lambda(G)| \leq c$. 

\defproblem{\textsc{Graph Coloring}}{A graph $G$.}{Compute $\chi(G)$.}

Let $G$ be a graph. A partition of $G$ into paths is a set of disjoint 
paths of $G$ whose union contains every vertex of $G$. We denote by
$\ham(G)$ the least integer $p$ such that $G$ has a partition into 
$p$ paths.

\defproblem{\textsc{Partitioning Into Paths}}{A graph $G$.}{Compute
  $\ham(G)$.}

\defproblem{\textsc{Hamiltonian Path}}{A graph $G$.}{Does $G$ have a
  Hamiltonian Path?}

\defproblem{\textsc{Hamiltonian Cycle}}{A graph $G$.}{Does $G$ have a
  Hamiltonian Cycle?}

\subsection{Parameterized Complexity}

Here we introduce the relevant concepts of parameterized complexity theory.
For more details, we refer to text books on the topic~\cite{DowneyFellows99,FlumGrohe06,Niedermeier06}.
An instance of a parameterized problem is a pair $(I,k)$ where $I$ is
the main part of the instance, and $k$ is the parameter.  A
parameterized problem is \emph{fixed-parameter tractable} if instances
$(I,k)$ can be solved in time $f(k)\Card{I}^c$, where $f$ is a
computable function of $k$, and $c$ is a constant.  $\FPT$ denotes the
class of all fixed-parameter tractable problems.  Hardness for
parameterized complexity classes is based on \emph{fpt-reductions}.  A
parameterized problem $L$ is fpt-reducible to another parameterized
problem~$L'$ if there is a mapping $R$ from instances of $L$ to
instances of $L'$ such that (i) $(I,k) \in L$ if and only if $(I',k')
= R(I,k) \in L'$, (ii) $k' \leq g(k)$ for a computable function $g$,
and (iii) $R$ can be computed in time $O(f(k)\Card{I}^c)$ for a
computable function $f$ and a constant~$c$.  
% If $L=L'$, and $f(k)$ is
% a polynomial, and there is a function $h$ such that $\Card{I'}\leq
% h(k)$, then the fpt-reduction is called a \emph{kernelization} and $h$
% is called the kernel size. If $h$ is a polynomial then we say that the
% problem $L$ admits a \emph{polynomial kernel}.  
Central to the
completeness theory of parameterized complexity is the hierarchy $\FPT
\subseteq \W[1] \subseteq \W[2] \subseteq \dots $.  Each
intractability class $\W[t]$ contains all parameterized problems that
can be reduced to a certain parameterized satisfiability problem under
fpt-reductions.

\subsection{Treewidth}

The treewidth of a graph is defined using the following notion of a
tree decomposition (see, e.g., \cite{Bodlaender93}).
A \emph{tree decomposition} of an (undirected) graph 
$G=(V,E)$ is a pair $(T,\chi)$ where $T$ is a tree and $\chi$ is a labeling
function that assigns each tree node $t$ a set $\chi(t)$ of vertices of
the graph $G$  such that the
following conditions hold: 
(1)~Every vertex of $G$ occurs in $\chi(t)$ for some tree node~$t$,
(2)~For every edge $\{u,v\}$ of $G$ there is a tree node
$t$ such that $u,v\in \chi(t)$, and (3)~For every vertex $v$ of $G$,
the tree nodes $t$ with $v\in \chi(t)$ form a connected subtree
of~$T$.
The
\emph{width} of a tree decomposition $(T,\chi)$ is the size of a largest bag
$\chi(t)$ minus~$1$ among all nodes~$t$ of~$T$.  A tree decomposition of
smallest width is \emph{optimal}.  The \emph{treewidth} of a graph $G$ is the
width of an optimal tree decomposition of~$G$.

\subsection{Shrub-depth}
The recently introduced notion of {\em shrub-depth} \cite{GanianHNOMR12} is
the ``low-depth'' variant of clique-width, similar to the role that tree-depth
plays with respect to treewidth.

%Clique width of a graph $G$ is defined as the smallest number of labels $k=cw(G)$
%such that $G$ can be constructed using operations to
%create a new vertex with label $i$, take the disjoint union of
%two labelled graphs, add all edges between vertices of label $i$ and
%label $j$, and relabel all vertices with label $i$ to have label $j$.
%Similarly as tree-depth is related to tree-width,
%there exists a very new notion of {\em shrub-depth} \cite{GanianHNOMR12}
%which is (in a sense) related to clique-width,
%and which we explain next.

\begin{DEF}
\label{def:tree-model}
We say that a graph $G$ has a {\em tree-model of $m$ colors and depth $d\geq1$}
if there exists a rooted tree $T$ (of height $d$) such that 
\begin{enumerate}
 \item the set of leaves of $T$ is exactly $V(G)$,
\item the length of each root-to-leaf path in $T$ is exactly~$d$,
\item each leaf of $T$ is assigned one of $m$ colors
(this is not a graph coloring, though),
\item\label{it:tree-model-edge}
and the existence of a $G$-edge between $u,v\in V(G)$ depends solely
on the colors of $u,v$ and the distance between $u,v$ in $T$.
\end{enumerate}
The class of all graphs having a tree-model of $m$ colors and depth $d$ 
is denoted by $\TM dm$.
\end{DEF}

\begin{DEF}
\label{def:shrub-depth}
A class of graphs $\mathcal{G}$ has {\em shrub-depth} $d$
if there exists $m$ such that $\mathcal{G}\subseteq\TM dm$,
while for all natural $m$ it is $\mathcal{G}\not\subseteq\TM{d-1}m$.
\end{DEF}

Note that Definition~\ref{def:shrub-depth} is asymptotic as it makes sense 
only for infinite graph classes.
Particularly, classes of shrub-depth $1$ are known as the graphs of bounded
{\em neighborhood diversity} in~\cite{Lampis12},
i.e., those graph classes on which the twin relation on pairs of vertices
(for a pair to share the same set of neighbors besides this pair)
has a finite index.

\longversion{
  \subsection{Properties of Shrupth-depth}

  In this section we will show some technical results that will later help us
  for our hardness proofs.
  For two graphs $G$ and $H$, let $G \bowtie H$ denote the set of graphs such that they
  consist of a copy of $G$, a copy of $H$ (disjoint from $G$) and arbitrary edges between
  vertices of $G$ and $H$. 

  \begin{PRO}
    \label{pro:sd1}
    Let $G$ be a graph with tree-model $\textup{TM}_G$ of height $d$ which uses $m$ colors and let $H$
    be a graph. Then there 
    exists, for every $G' \in G \bowtie H$, a tree-model of height $d$ which uses at most
    $m \cdot 2^{|V(H)|} + |V(H)|$ colors.
  \end{PRO}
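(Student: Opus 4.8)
The plan is to start from the given tree-model $\textup{TM}_G$ of $G$ and turn it into a tree-model of $G'$ by two operations: (i) \emph{refining} the colors of the leaves coming from $G$ so that each such leaf additionally records which vertices of $H$ it is adjacent to in $G'$, and (ii) \emph{appending} the vertices of $H$ as brand-new leaves, each receiving its own fresh color. Concretely, a $G$-leaf's color becomes a pair $(c,S)$, where $c$ is one of the original $m$ colors and $S \subseteq V(H)$ is the set of its $H$-neighbours in $G'$; hence the leaves of $G$ use at most $m\cdot 2^{|V(H)|}$ colors, and the $|V(H)|$ new leaves use $|V(H)|$ further colors, giving exactly the claimed bound $m\cdot 2^{|V(H)|}+|V(H)|$.

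For the tree itself, I would keep the tree $T$ underlying $\textup{TM}_G$ unchanged and attach each vertex $v\in V(H)$ as a new leaf joined to the root of $T$ by a fresh path of length $d$. This keeps every root-to-leaf path of length exactly $d$, so the height remains $d$. The point of hanging the $H$-leaves directly under the root is that the relevant distances become uniform and easy to control: any $G$-leaf and any $H$-leaf have the root as their lowest common ancestor and so lie at distance $2d$, and likewise any two $H$-leaves lie at distance $2d$, while the pairwise distances among $G$-leaves are exactly those in $T$.

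It then remains to exhibit a single edge-labelling function $f'$ (on unordered pairs of colors together with a distance) that reproduces the edges of $G'$. Here the key device is that the colors of $G$-leaves (the pairs $(c,S)$) and those of $H$-leaves (the fresh colors) are drawn from disjoint sets, so $f'$ can first branch on the \emph{types} of its two arguments: on two $G$-colors it returns the original value $f(c_1,c_2,\delta)$ given by $\textup{TM}_G$, simply ignoring the $S$-components; on a $G$-color $(c,S)$ together with the fresh color of some $v\in V(H)$ at distance $2d$ it returns ``edge'' iff $v\in S$; and on the two fresh colors of some $u,v\in V(H)$ at distance $2d$ it returns ``edge'' iff $\{u,v\}\in E(H)$. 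Since within-$G$ edges are untouched, cross edges are recorded in the $S$-components, and within-$H$ edges are read off $E(H)$, this $f'$ recovers precisely $G'$.

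The one thing that genuinely needs checking --- and the only place where the construction could fail --- is the well-definedness of this global $f'$, since by definition the edge relation must depend \emph{only} on the two colors and the distance. Thus I must verify that no two leaf-pairs sharing the same pair of colors and the same distance are ever required to disagree. Disjointness of the color types rules out all cross-type clashes; among pairs of $G$-leaves, consistency is simply inherited from the validity of $\textup{TM}_G$, noting that refining colors only splits color classes and never merges them; and among cross-pairs and $H$-pairs, equal colors force the same underlying $H$-vertices, so the prescribed value is unambiguous. Once the disjointness of the color classes and the uniform distance $2d$ of all new pairs are in place, verifying these cases is routine.
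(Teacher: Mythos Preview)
Your proof is correct and follows essentially the same approach as the paper: refine each $G$-leaf's color by its $H$-neighbourhood (giving the $m\cdot 2^{|V(H)|}$ factor), give each $H$-vertex its own fresh color, and attach the $H$-vertices under the root at depth $d$. The only cosmetic difference is that the paper hangs all $H$-leaves under a single new node (a height-$1$ subtree connected to the root of $\textup{TM}_G$ by a path of length $d-1$), so that two $H$-leaves are at mutual distance $2$ rather than $2d$; since the $H$-leaves receive pairwise distinct colors anyway, this makes no difference to the argument, and your verification of well-definedness is in fact more explicit than the paper's.
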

  \begin{proof}
    For simplicity, we refer to copies of $G$ and $H$ in $G'$ by $G$ and $H$. We fix
    arbitrary ordering of $V(H)$.

    Since each vertex of $G$ is adjacent to some subset of $V(H)$ in $G'$, there are at most $2^{V(H)}$
    ``types'' of vertices in $G$ according to their neighborhoods in $H$. We extend
    the colors used in the leaves of $\textup{TM}_G$ by $|V(H)|$ bits and for each leaf $l$ of
    $\textup{TM}_G$, we set these bits as follows: we set the $i$-th bit to $1$ iff the vertex 
    $v$ of $G$ which corresponds to $l$ in $\textup{TM}_G$ is adjacent to the $i$-th vertex of $H$
    in $G'$. We denote this 
    new tree-model by $\textup{TM}^+_G$.

    Next, we construct a tree-model $\textup{TM}_H$ of height one for $H$ -- it consists of
    a root vertex and $V(|H|)$ leaves, each of different color (we use new colors,
    different from those of $\textup{TM}_G^+)$.

    Now we construct a tree-model $\textup{TM}_{G'}$ of $G'$ -- we connect the root of $\textup{TM}_H$ 
    to the root of $\textup{TM}_G^+$ by a path of length $d-1$ (to ensure that the leaf-to-root distance is the same in the whole tree-model; if the height of $\textup{TM}_G^+$ is one,
    we identify the root of $\textup{TM}_H$ with the root of $\textup{TM}_G^+$.)  It is not hard to see
    that $\textup{TM}_{G'}$ is a tree-model of $G'$:
    \begin{itemize}
    \item the edges in the ``$H$''-part of $G'$ depend only on the colors of the subtree $\textup{TM}_H$ of $\textup{TM}_{G'}$,
    \item the edges in the ``$G$''-part of $G'$ depend only on the colors of the subtree $\textup{TM}^+_G$ of
      $\textup{TM}_{G'}$ and the original distances in $\textup{TM}_G$ (simply by looking at the colors of leaves before adding
      new bits, i.e. ignoring additional bits),
    \item the edges between the vertices of ``$H$''- and ``$G$''-parts depend only on the colors used
      in the ``$H$''-part and the newly added bits in the ``$G$''-part of $G'$ (all are at distance $2d$
      in $\textup{TM}_{G'}$),
    \item none of these three edge-dependencies affect the other two.
    \end{itemize} 
    \shortversion{\qed}\end{proof}

  \begin{PRO}
    \label{pro:sd3}
    Let $G$ be a graph, $S$ a subset of its vertices, and $\textup{TM}_G$ its tree-model of height
    $d_G$ and $m_G$ colors. Let $H$ be a graph, $R$ a subset of its vertices, and $\textup{TM}_H$ its tree-model of height $d_H$ and $m_H$ colors. Assume that $d_H \le d_G$. Then the graph $G'$ obtained
    by taking a copy of $G$, a copy of $H$, and making each vertex from $S$ adjacent to each
    vertex of $R$ has a tree-model $\textup{TM}_{G'}$ of height $d$ and at most $2m_G + 2m_H$ colors.
  \end{PRO}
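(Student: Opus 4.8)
The plan is to build $\textup{TM}_{G'}$ by gluing the two given tree-models together under a common root and recoloring their leaves so that the complete bipartite connection between $S$ and $R$ becomes expressible purely in terms of colors and distances. Concretely, I would set $d := d_G$ (legitimate since $d_H \le d_G$), take $\textup{TM}_G$ as the backbone, and hang $\textup{TM}_H$ off the root of $\textup{TM}_G$ via a path of length $d_G - d_H$ (identifying the two roots when $d_G = d_H$). Since $d_H \le d_G$ this path has nonnegative length, the overall root remains the root of $\textup{TM}_G$, and every leaf --- whether it comes from $G$ or from $H$ --- ends up at depth exactly $d$. This is the same padding trick used in Proposition~\ref{pro:sd1}; it keeps the height at $d$ while leaving all internal distances of $\textup{TM}_G$ and of $\textup{TM}_H$ untouched.

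The coloring is the heart of the construction. I would use two \emph{disjoint} palettes, one for $G$ and one for $H$. Each $G$-leaf keeps its original $\textup{TM}_G$-color augmented by a single bit recording whether the corresponding vertex lies in $S$; this turns $m_G$ colors into at most $2m_G$ colors. Symmetrically, each $H$-leaf keeps its $\textup{TM}_H$-color augmented by a bit recording membership in $R$, giving at most $2m_H$ fresh colors. The total is at most $2m_G + 2m_H$, as required. The point of the membership bits is exactly that the new edges form a complete bipartite graph between $S$ and $R$, so they can be specified by the rule ``edge iff the $S$-bit and the $R$-bit are both set.''

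It remains to define the edge relation of $\textup{TM}_{G'}$ and check it is a legitimate function of (color, color, distance). Pairs split into three cases distinguishable \emph{by their colors alone}, since the two palettes are disjoint: for a pair of $G$-colors I reuse the original $\textup{TM}_G$-rule after projecting away the $S$-bit; for a pair of $H$-colors I reuse the $\textup{TM}_H$-rule after projecting away the $R$-bit; and for a mixed pair I declare an edge precisely when the $S$-bit and the $R$-bit are both $1$. Because the LCA of a $G$-leaf and an $H$-leaf is the root of $\textup{TM}_G$, every mixed pair sits at distance exactly $2d$, so the cross-rule is also consistent with letting adjacency depend on distance.

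The step I expect to be the main obstacle is verifying \emph{well-definedness}: a tree-model forces the adjacency to be a single function of the color pair and the distance, yet the distance $2d$ at which all cross pairs occur also arises among internal $G$-pairs (those whose LCA is the root of $\textup{TM}_G$, giving distance $2d_G = 2d$), and possibly among internal $H$-pairs when $d_H = d_G$. I would resolve this by leaning on the disjointness of the two palettes: at any distance the color pair already records whether both endpoints lie in $G$, both in $H$, or one in each, so the three rules above never conflict. Checking that $\textup{TM}_{G'}$ with this edge relation indeed realizes $G'$ --- i.e. that the within-$G$ and within-$H$ adjacencies are preserved and the only additional edges are the $S$--$R$ ones --- is then a routine case analysis.
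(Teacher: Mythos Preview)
Your proposal is correct and follows essentially the same construction as the paper: disjoint palettes for $G$ and $H$, one extra bit per leaf to flag membership in $S$ (resp.\ $R$), and attaching $\textup{TM}_H$ to the root of $\textup{TM}_G$ via a path of length $d_G-d_H$. If anything, your discussion of well-definedness (the potential clash at distance $2d$ resolved by palette disjointness) is more explicit than the paper's own argument.
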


  \begin{proof}
    We assume that the sets of colors in $\textup{TM}_G$ and $\textup{TM}_H$ are disjoint. The proof uses the
    same idea as the proof of Proposition~\ref{pro:sd1}. First, we extend the colors in both
    $\textup{TM}_H$ and $\textup{TM}_G$ by one additional bit. For a leaf of $\textup{TM}_G$, we set this bit to 1 iff
    the corresponding vertex of $G$ is in $S$. Similarly, for a leaf of $\textup{TM}_H$, we set this
    bit to 1 iff the corresponding vertex of $H$ is in $R$. We denote the tree-models obtained in
    this way by $\textup{TM}^+_G$ and $\textup{TM}^+_H$.

    Now we construct $\textup{TM}_{G'}$ by connecting the root of $\textup{TM}_H$ to the root of $\textup{TM}_G$ by a path
    of length $d_G - d_H$ (or identify these two roots if $d_G = d_H$). 
    The distances between
    the pairs of leaves in both ``$G$''- and ``$H$''-parts of $\textup{TM}_{G'}$ are the same as they 
    were in $\textup{TM}_G$ and $\textup{TM}_H$; their original colors can be ``recovered'' by simply ignoring
    the newly added bit.
    The existence of an edge between a vertex from $G$ and a vertex from $H$ in $G'$ is determined
    by $\textup{TM}_{G'}$ by the value of the additional bit and the fact that their colors come from different
    tree-models (the sets of colors were disjoint); the distance between them is always $2d$.
    Thus, $\textup{TM}_{G'}$ is indeed a tree-model of $G'$. Since the numbers $d_G$ and $d_H$ were
    doubled by adding a new bit to each color and since $\textup{TM}_{G'}$ does not use any other color,
    the Proposition is now proved.

    \shortversion{\qed}\end{proof}

  \begin{PRO}
    \label{pro:sd2}
    Let $G$ be a graph with a tree-model $\textup{TM}_G$ of height $d$ and $m$ colors and 
    let $R$ be a subset of $V(G)$. Let $H$ be a 
    graph and $S$ be a subset of $V(H)$. Let $G'$ be the graph obtained from $G$ by
    creating $|R|$ copies of $H$ indexed
    by vertices of $R$ and making each $v \in R$ adjacent to the vertices $S_v$ of $H_v$.
    Then $G'$ has a tree-model $\textup{TM}_{G'}$ of height $d+1$ and $m + |V(H)|$ colors.
  \end{PRO}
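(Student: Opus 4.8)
The plan is to build $\textup{TM}_{G'}$ directly from $\textup{TM}_G$ by appending one new bottom level, in the same spirit as the proof of Proposition~\ref{pro:sd1}, except that here the extra structure is absorbed by the additional depth rather than by extra bits on the colors. First I would equip $H$ with the trivial tree-model of height one in which every vertex of $H$ gets its own private color; this is what supplies the $|V(H)|$ new colors, and since any two vertices of $H$ then sit at distance two, every edge inside a copy of $H$ can be decided from the pair of colors alone. Crucially, I reuse this same palette of $|V(H)|$ colors for every copy $H_v$, so that the color of a leaf records which vertex of $H$ it represents but not which copy it lies in.

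Next I would modify the tree itself. Each vertex $v\in V(G)$, currently a leaf of $\textup{TM}_G$ at depth $d$, becomes an internal node $t_v$ at depth $d$, and I hang new leaves beneath $t_v$ at depth $d+1$: one leaf carrying the vertex $v$ with its original color and, when $v\in R$, in addition the $|V(H)|$ leaves forming the copy $H_v$, each colored according to the vertex of $H$ it represents. Then every root-to-leaf path has length exactly $d+1$, the leaf set is precisely $V(G')$, and the palette has size $m+|V(H)|$, as required. The single fact I would carry through the rest of the argument is that pushing every former leaf down by one level increases every old leaf-to-leaf distance by exactly two, while the genuinely new pairs of leaves (those sharing a common parent $t_v$) sit at distance exactly two.

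The substance of the proof, and the step I expect to be the main obstacle, is to define adjacency as a function of the two colors and the distance and to check that this is consistent with $G'$. I would argue by splitting into the three color regimes. For two original $G$-colors the contributing pairs are exactly pairs of vertices of $G$, whose distance is the old one increased by two, so I simply reuse the edge predicate of $\textup{TM}_G$ with its distance argument decremented by two. For two $H$-colors, distance two occurs precisely for two leaves of a single copy $H_v$ (the ones sharing the parent $t_v$), where I declare an edge iff the underlying vertices are adjacent in $H$; any greater distance forces the two leaves into distinct copies and hence no edge. For one $G$-color and one $H$-color, distance two occurs only between a vertex $v$ and a leaf of its own copy $H_v$, where I declare an edge iff that $H$-vertex lies in $S$ (this realizes the adjacencies from $v$ to $S_v$); every greater distance is at least four, since the two leaves then descend from distinct former leaves of $\textup{TM}_G$, and gives no edge. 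The only place a conflict could arise is among distance-two pairs, but those are always siblings under one node $t_v$, and both ``lies in $S$'' and ``adjacent in $H$'' are determined by the distinct $H$-colors alone; hence no triple (color, color, distance) is ever asked to be simultaneously an edge and a non-edge. Checking that these cases exhaust all pairs of leaves and reproduce exactly the edges of $G'$ finishes the proof.
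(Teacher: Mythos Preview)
Your proposal is correct and follows essentially the same construction as the paper: push every former leaf down one level (the paper phrases this as ``subdivide the edge'' for $v\notin R$ and ``replace by a height-one tree'' for $v\in R$), keep the original $m$ colors on the $G$-leaves, and give each vertex of $H$ its own fresh color reused across all copies. Your verification of the edge predicate by the three color regimes is more explicit than the paper's, which simply observes that old leaf-to-leaf distances all increase by two and that the new distance-two pairs are siblings under some $t_v$ whose adjacency is determined by their colors; but the underlying argument is identical.
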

  \begin{proof}
    We extend the set of colors used in $\textup{TM}_G$ by $|V(G)|$ new colors.
    The construction of $\textup{TM}_{G'}$ from $\textup{TM}_G$ is simple -- for each $v \in R$, replace its
    corresponding leaf $l$ in $\textup{TM}_G$ by a tree of height one which has:
    \begin{itemize}
    \item leaf $l'$ with the same color as $l$ 
    \item $|V(H)|$ leaves, each with different color from the newly created colors
    \end{itemize}
    For each $v \not\in R$, subdivide once the edge to which its corresponding leaf was 
    adjacent in $\textup{TM}_G$ (to have the same leaf-to-root distance in the whole $\textup{TM}_{G'}$).

    It is easy to see that $\textup{TM}_{G'}$ is indeed a tree-model of $G'$ -- for each $v \in R$
    the leaves corresponding to $v$ and $V(H_v)$ in $\textup{TM}_{G'}$ are at distance two from each
    other and the adjacency depends only on the colors.
    The color of each leaf corresponding to $v \in V(G)$ remained the same, and
    the distance between each pair of such leaves was increased by 2.
    \shortversion{\qed}\end{proof}

  \begin{LEM}
    \label{lem:tw_sd} Let $\mathcal{G}$ be a class of graphs which admit
    a tree-decomposition of width $w$ and height $d$. Then $\mathcal{G}$
    has shrub-depth $d+1$.
  \end{LEM}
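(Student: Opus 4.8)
The plan is to build, for every $G \in \mathcal{G}$ from a given tree-decomposition $(T,\chi)$ of width $w$ and height $d$, a tree-model of depth $d+1$ whose number of colors is bounded by a function of $w$ and $d$ only; by Definition~\ref{def:shrub-depth} this shows $\mathcal{G} \subseteq \TM{d+1}{m}$ and hence that $\mathcal{G}$ has shrub-depth at most $d+1$.

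First I would root $T$ and assign to each vertex $v$ the node $t_v$ closest to the root with $v \in \chi(t_v)$; this is well-defined because the nodes containing $v$ form a connected subtree $T_v$ (condition~(3) of a tree decomposition), and $t_v$ is its root. The tree $T'$ of the model is obtained from $T$ by hanging each $v\in V(G)$ as a leaf below $t_v$ through a fresh pendant path, padded so that every leaf sits at depth exactly $d+1$ (pruning first, w.l.o.g., any leaf of $T$ to which no vertex is assigned, so that the leaves of $T'$ are exactly $V(G)$). With this bottom-padding the leaves $u,v$ meet at their lowest common ancestor in $T'$, which is the shallower of $t_u,t_v$ whenever these are comparable; consequently the $T'$-distance of two leaves is $2(d+1-p)$, where $p$ is the depth of that common ancestor, so distance and common-ancestor depth carry the same information.

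Two structural facts drive the coloring. Using the Helly property of subtrees of a tree, the intersection graph of the family $\{T_v\}$ is chordal with clique number at most $w+1$, so it admits a proper $(w+1)$-coloring; I would fix such a labelling $\mathrm{lab}\colon V(G)\to\{0,\dots,w\}$, giving distinct labels to any two vertices that share a bag. Second, if $\{u,v\}\in E(G)$ then $t_u$ and $t_v$ are comparable (a bag contains both, so some node lies in $T_u\cap T_v$ and both $t_u,t_v$ are its ancestors), and if say $t_u\preceq t_v$ then in fact $u\in\chi(t_v)$ (the connected $T_u$ contains $t_u$ and the common bag, hence the whole path through $t_v$). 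Thus every edge is ``seen'' inside the bag of its deeper endpoint. I would then color each $v$ by the triple $(\mathrm{lab}(v),\mathrm{depth}(t_v),D_v)$, where $D_v(\ell)$ records the depth of $t_z$ for the unique bag-mate $z\in\chi(t_v)$ with $\mathrm{lab}(z)=\ell$ that is adjacent to $v$ (and a null value if there is none); uniqueness is guaranteed by the rainbow property of bags. The number of colors is then at most $(w+1)(d+1)(d+2)^{w+1}$, a bound depending only on $w$ and $d$.

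The adjacency rule for the model would declare $x,y$ (with $\mathrm{depth}(t_x)=a\le b=\mathrm{depth}(t_y)$ read off the colors, and common-ancestor depth $p$ read off the distance) adjacent precisely when $p=a$ and $D_y(\mathrm{lab}(x))=a$; the equal-depth case is handled symmetrically. This is manifestly a function of the two colors and the distance, as required by Definition~\ref{def:tree-model}, so the only thing to check is that it agrees with $E(G)$. The forward direction is immediate from the two structural facts. The main obstacle --- and the reason the extra coordinate $D_v(\cdot)$ is needed --- is to rule out ``false positives'': a non-adjacent pair $x,y$ receiving the same colors and distance as some genuine edge. This is where the tree is exploited: the condition $p=a$ forces $t_x$ to be the \emph{unique} depth-$a$ ancestor of $t_y$, while $D_y(\mathrm{lab}(x))=a$ produces a vertex $z\in\chi(t_y)$ with $\mathrm{lab}(z)=\mathrm{lab}(x)$ and $\mathrm{depth}(t_z)=a$; since $t_z$ is also an ancestor of $t_y$ at depth $a$, we get $t_z=t_x$, whence $x$ and $z$ share the bag $\chi(t_x)$ and the label $\mathrm{lab}(x)$, forcing $x=z$ and so $\{x,y\}\in E(G)$. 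I expect verifying this no-confusion step, together with the routine bookkeeping that all leaves can be placed at depth exactly $d+1$, to be the only genuinely delicate part.
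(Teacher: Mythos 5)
Your proof is correct, but it takes a genuinely different route from the paper's. The paper argues by induction on the height $d$: it peels off a root bag $B_r$, obtains height-$d$ tree-models for the remaining sub-decompositions from the induction hypothesis, joins them under a fresh root, and re-attaches $G[B_r]$ via Proposition~\ref{pro:sd1} (the $G\bowtie H$ construction), which multiplies the number of colors by $2^{|B_r|}$ at every level and so yields a bound of order $2^{O(wd)}$. You instead give one explicit, non-recursive tree-model: hang each vertex below its topmost bag $t_v$, and color it by the triple (a proper label making bag-mates distinct, the depth of $t_v$, and an adjacency fingerprint inside $\chi(t_v)$), the two driving facts being that every edge is witnessed in the bag of its deeper endpoint and that the condition ``LCA depth $=\mathrm{depth}(t_x)$'' pins down $t_x$ as the unique depth-$a$ ancestor of $t_y$, so the fingerprint cannot produce false positives. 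Your construction is more delicate to verify (the no-confusion step, which you identify and carry out correctly, is exactly where the work is), but it is self-contained --- it does not need Proposition~\ref{pro:sd1} --- and it gives a quantitatively better color bound, $(w+1)(d+1)(d+2)^{w+1}$, singly exponential in $w$ and only polynomial in $d$. Note that both arguments, consistent with how the lemma is actually used in the paper, establish only the upper bound: the class has shrub-depth \emph{at most} $d+1$.
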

  \begin{proof} 
    
    Let $G \in \mathcal{G}$.
    We proceed by induction on the height $d$ of a tree-decomposition $(T,\chi)$
    of $G$ of width at most $w$.

    If the
    height is $0$, then the tree-decomposition consists of $1$ bag which
    contains at most $w$ vertices. A tree-model of height $1$ for the graph
    induced by this bag can be easily constructed -- create a tree of
    height $1$, with $1$ leaf for every vertex in the bag and
    assign each leaf a different color. For graphs of treewidth $w$, these
    tree-models always use at most $w$ colors.

    If the height of $T$ is $d$, we take a root bag $B_r$ (by root bag we mean
    any bag whose distance to any other bag is at most $d$) of the tree-decomposition
    and create a tree-model $\textup{TM}_r$ of height one for the graph induced
    by the bag; this requires at most $w$ colors. After deleting
    this bag from the
    decomposition, we are left with a collection of tree-decompositions of
    height $d-1$. We delete all vertices of $B_r$ from all bags of these 
    decompositions.
    By the induction hypothesis, the graphs induced by these
    decompositions have tree-models of height $d$ with at most $m$ colors
    (for some natural $m$, which depends only on $d$ and $w$).
    Let us denote these tree-models by $\textup{TM}_i$.
    
    Now we construct a tree-model $\textup{TM}^-$ of height $d+1$ by introducing
    a root vertex $r$ and connecting the root of each $\textup{TM}_i$ to $r$. This
    is a tree-model of $G \setminus B_r$, which uses $m$ colors. Since
    $G \in G[B_r] \bowtie G[G \setminus B_r]$, we can use Proposition~\ref{pro:sd1}
    to conclude that there exists a tree-model $\textup{TM}_G$ of $G$ of height $d+1$
    which uses at most $m \cdot 2^{w} + w$ colors. Since neither $m$ or $w$ depend
    on particular choice of $G$, this concludes the proof.
    \shortversion{\qed}\end{proof}
}

\subsection{Modular-Width}\label{subsec:modular-width}

For our algorithms we consider graphs that can be obtained from an
algebraic expression that uses the following operations:
\begin{itemize}
\item[] (O1) create an isolated vertex;
\item[] (O2) the disjoint union of $2$ graphs, i.e., 
  the \emph{disjoint union} of $2$ graph $G_1$ and $G_2$,
  denoted by $G_1 \oplus G_2$, is the graph with vertex
  set $V(G_1) \cup V(G_2)$ and edge set $E(G_1) \cup E(G_2)$;
\item[](O3) the complete join of $2$ graphs, i.e., the \emph{complete
    join} of $2$ graphs $G_1$ and $G_2$, denoted by $G_1 \otimes G_2$,
  is the graph with vertex set $V(G_1) \cup V(G_2)$ and edge set 
  $E(G_1) \cup E(G_2) \cup \SB \{v,w\} \SM v \in V(G_1) \textup{ and }w
  \in V(G_2) \SE$;
\item[](O4) the substitution operation with respect to some graph $G$ with
  vertices $v_1,\dots,v_n$, i.e., for graphs $G_1,\dots,G_n$ the
  \emph{substitution} of the vertices of $G$ by the graphs
  $G_1,\dots,G_n$, denoted by $G(G_1,\dots,G_n)$, is the graph with
  vertex set $\bigcup_{1\leq i \leq n}V(G_i)$ and edge set $\bigcup_{1
    \leq i \leq n}E(G_i) \cup \SB \{u,v\} \SM u \in V(G_i) \textup{
    and } v \in V(G_j) \textup{ and }i\neq j\SE$. Hence,
  $G(G_1,\dotsc,G_n)$ is obtained from $G$ by substituting every
  vertex $v_i \in V(G)$ with the graph $G_i$ and adding all edges
  between the vertices of a graph $G_i$ and the vertices of a graph
  $G_j$ whenever $\{v_i,v_j\} \in E(G)$. 

\end{itemize}
Let $A$ be an algebraic expression that uses only the operations
(O1)--(O4). We define the \emph{width} of $A$ as the maximum number of
operands used by any occurrence of the operation (O4) in $A$.
It is well-known that the \emph{modular-width} of a graph $G$, denoted
$\mw(G)$, is the least integer $m$ such that $G$ can be obtained from
such an algebraic expression of width at most $m$. Furthermore, an
algebraic expression of width $\mw(G)$ can be constructed in linear time
~\cite{TedderCorneilHabibPaul08}.

\subsection{Integer Linear Programming}

For our algorithms, we use the well-known result that \textsc{Integer
  Linear Programming} is fixed-parameter tractable parameterized by
the number of variables. 

\defparproblem{\textsc{Integer Linear Programming Feasibility}}
{A matrix $A \in \INTEGER^{m \times p}$ and a vector $b \in
  \INTEGER^{m}$.}
{p}
{Is there a vector $x \in \INTEGER^{p}$ such that $Ax \leq b$?}

\begin{PRO}[\cite{FellowsLokshtanocMisraRosamondSaurabh08}]\label{pro:ILPF}
  \textsc{Integer Linear Programming Feasibility} is fixed-parameter
  tractable and can be solved in time $O(p^{2.5p+o(p)}\cdot L)$ 
  where $L$ is the number of bits in the input.
\end{PRO}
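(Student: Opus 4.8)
Since this is the classical theorem of Lenstra, with the stated running time due to Kannan's refinement, the plan is to sketch the geometry-of-numbers argument that underlies it. We are given a rational polytope $P = \{x \in \mathbb{R}^p : Ax \le b\}$ and must decide whether $P \cap \mathbb{Z}^p \ne \emptyset$. My overall strategy would be a recursion on the dimension $p$: at each level we either certify directly that $P$ contains an integer point, or we find a single integer direction along which $P$ is flat, and then branch over the few integer values that the polytope can attain in that direction, recursing on the resulting $(p-1)$-dimensional slices.

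First I would preprocess. Solving the linear relaxation tells us in polynomial time whether $P$ is empty over the reals (if so, reject) and lets us reduce to a full-dimensional, bounded instance: unbounded directions and lower-dimensional affine hulls can be peeled off by standard reductions, so we may assume $P$ is a bounded, full-dimensional polytope. The crucial preprocessing step is to make $P$ \emph{well-rounded}. Using lattice basis reduction of LLL type, I would compute a unimodular change of coordinates after which $P$ is sandwiched between two concentric balls whose radii differ by a factor depending only on $p$. This normalization is exactly what lets the subsequent flatness argument be applied with bounds that depend on the dimension alone.

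Next comes the flatness step, which is the heart of the proof. By Khinchine's flatness theorem there is a constant $\omega(p)$, depending only on $p$, such that every convex body containing no lattice point has lattice width at most $\omega(p)$; here the lattice width of $P$ is $\min_{c \ne 0} \big( \max_{x \in P} c^\top x - \min_{x \in P} c^\top x \big)$ taken over nonzero integer vectors $c$. I would compute, or adequately approximate, a width-minimizing direction $c$ — this amounts to a shortest-vector computation in the lattice dual to the one fixed by the rounding step, again handled by basis reduction. If the attained width exceeds $\omega(p)$, the flatness theorem guarantees an integer point of $P$, which we can extract. Otherwise $c^\top x$ takes at most $\omega(p)+1$ integer values over $P \cap \mathbb{Z}^p$, and for each such value $t$ we recurse on the lower-dimensional polytope $P \cap \{x : c^\top x = t\}$.

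The recursion has depth $p$ and branching factor at most $\omega(p)+1$ at each node, while the dominant work per node is the basis reduction together with the shortest-vector-type width computation. I expect the main obstacle to be precisely this width computation: locating the flat direction $c$ is essentially a shortest-vector problem, and obtaining the sharp $p^{2.5p + o(p)}$ dependence, rather than a cruder $p^{O(p^2)}$, requires Kannan's careful enumeration-based analysis of reduced bases, together with bit-size control (for instance via Frank--Tardos preprocessing) to keep the dependence on $L$ linear. Combining the per-node cost with the recursion bound then yields the claimed running time.
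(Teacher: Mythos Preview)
Your sketch of the Lenstra--Kannan geometry-of-numbers argument is essentially correct as an outline of how this classical result is proved: reduce to a full-dimensional bounded polytope, round via LLL, invoke the flatness theorem to find a thin direction, and recurse on the resulting hyperplane slices, with Kannan's enumeration giving the $p^{2.5p+o(p)}$ bound.

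However, the paper does not prove this proposition at all. It is stated as a cited result (attributed to~\cite{FellowsLokshtanocMisraRosamondSaurabh08}, which in turn traces back to Lenstra and Kannan) and used as a black box in the algorithmic sections. So there is nothing to compare against: the paper's ``proof'' is simply the citation. Your write-up goes well beyond what the paper requires, which is only to invoke the statement when bounding the running time of the ILP subroutine in Lemma~\ref{lem:hamiltonian-cycle-of-product}.
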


% \defparproblem{\textsc{Integer Linear Programming Optimization}}
% {A matrix $A \in \INTEGER^{m \times p}$ and $2$ vectors $b \in
%   \INTEGER^{m}$ and $c \in \INTEGER^{p}$.}
% {p}
% {A vector $x \in \INTEGER^{p}$ that maximizes (the scalar product) $c
%   \cdot x$ such that $Ax \leq b$.}

% \begin{PRO}[\cite{FEL_LP}]\label{pro:ILPO}
%   \textsc{Integer Linear Programming Optimization} is fixed-parameter
%   tractable and can be solved in time $O(p^{2.5p+o(p)}\cdot L \cdot
%   \log (M \cdot N))$ where $L$ is the number of bits in the input, $N$
%   is the maximum of the absolute values any variable can take, and $M$
%   is an upper bound on the absolute value of the optimum value of the
%   objective function.
% \end{PRO}

\section{Hardness for Problems on Shrub-depth}

In this section we give evidence that the recently introduced
parameter shrub-depth is not restrictive enough to obtain fixed-parameter
algorithms for problems that are \W[1]-hard on graphs of bounded
cliquewidth. 
%This is surprising to us as shrub-depth captures the
%non-elementary behavior of $\msol_1$-definable
%problems on bounded cliquewidth graphs~\cite{GanianHNOMR12}. 
In particular, we show that
\textsc{Graph Coloring} and \textsc{Hamiltonian Path} are \W[1]-hard
parameterized by the number of colors (used in a tree-model of the
input graph) on classes of graphs of shrub-depth $5$. 
Note that restricting the shrub-depth means restricting the height of
the tree-model that can be employed and for every restriction on the
height of the tree-model the number of colors needed to model the
graph gives a different parameter. In particular, if we restrict the
shrub-depth to $1$ the number of colors of the tree-model corresponds
to the neighborhood diversity of a graph. This implies
that \textsc{Graph Coloring} and \textsc{Hamiltonian Path} become
fixed-parameter tractable when parameterized by the number of colors
(used in a tree-model of the input graph) on classes of graphs of
shrub-depth $1$~\cite{Lampis12}. It is an interesting open question what is
the least possible shrub-depth that allows for fixed-parameter
algorithms for the problems \textsc{Graph Coloring} and
\textsc{Hamiltonian Path}. 

\shortversion{
  \begin{THE}[$\star$]\label{the:coloring-hard}
    \textsc{Graph Coloring} parameterized by the number of colors (used in a
    tree-model of the input graph) is \W[1]-hard on classes of graphs of
    shrub-depth $5$.
  \end{THE}
}
\longversion{
  \begin{THE}\label{the:coloring-hard}
    \textsc{Graph Coloring} parameterized by the number of colors (used in a
    tree-model of the input graph) is \W[1]-hard on classes of graphs of
    shrub-depth $5$.
  \end{THE}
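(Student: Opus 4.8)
The plan is to reuse the known \W[1]-hardness reduction for \textsc{Graph Coloring} parameterized by clique-width due to Fomin et al.~\cite{FominGLS10} and to show that, after small modifications, the graph it produces not only has bounded clique-width but in fact admits a tree-model of depth~$5$ whose number of colours is bounded by a function of the reduction parameter. Concretely, I would start from a \W[1]-hard source problem (such as \textsc{Multicolored Clique}) and recall the target graph $G$ of the reduction: it is assembled from a bounded number of gadgets---a palette part that fixes the set of available colours, one selection gadget per colour class of the source instance, and verification gadgets enforcing consistency---all interconnected by complete joins and by adjacencies dictated by the source instance. The correctness of the reduction, namely that $\chi(G)$ encodes the yes/no answer, is inherited from~\cite{FominGLS10}; the only new content is the structural claim about shrub-depth and colour count.

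First I would identify the base building blocks of $G$. Each gadget, viewed in isolation, is a graph of bounded neighbourhood diversity (a union of cliques and independent sets realizing only a few distinct neighbourhood types), so by the remark following Definition~\ref{def:shrub-depth} it has a tree-model of depth~$1$ using a number of colours equal to its number of twin classes, which is bounded by a function of the parameter even when the gadget itself is large. I would then assemble $G$ from these blocks level by level, using Propositions~\ref{pro:sd1}, \ref{pro:sd2} and~\ref{pro:sd3}: Proposition~\ref{pro:sd1} merges a subgraph with the rest of the graph while recording, in $2^{|V(H)|}$ additional colours, which external vertices each block vertex sees; Proposition~\ref{pro:sd3} realizes a complete join between two already-built parts at the cost of doubling the colour count; and Proposition~\ref{pro:sd2} attaches indexed copies of a gadget, increasing the depth by one. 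Because the reduction nests gadgets to only a constant number of levels, a careful accounting shows that the resulting tree-model has depth~$5$.

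The colour bookkeeping is then routine but must be verified: each application of Proposition~\ref{pro:sd1} multiplies the colour count by $2^{|V(H)|}$ and adds $|V(H)|$, each application of Proposition~\ref{pro:sd3} doubles it, and each application of Proposition~\ref{pro:sd2} adds $|V(H)|$. Since every auxiliary graph $H$ appearing in these steps is a single gadget whose size is bounded by a function of the parameter, and since the number of combination steps is constant (governed by the five levels), the final colour count is bounded by a function of the parameter alone, independently of the size of the source instance. This places $G$ in $\TM{5}{m}$ for some $m$ depending only on the parameter, so the produced instances belong to a class of shrub-depth at most~$5$.

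The main obstacle I anticipate is depth control rather than colour control. The recursive relabelling that keeps clique-width small can correspond to a tree-model of unbounded height, so the graph of~\cite{FominGLS10} need not have bounded shrub-depth as originally written. The real work is the promised ``small modification'': redesigning the gadgets so that every adjacency between two parts is decided within a constant number of levels---typically by equipping each block with a small set of marker colours that encode its entire external neighbourhood, exactly as in the proof of Proposition~\ref{pro:sd1}---while provably leaving $\chi(G)$ unchanged. Checking simultaneously that this flattening keeps the depth at~$5$, keeps the colour count bounded by a function of the parameter, and preserves the chromatic number is the delicate heart of the argument.
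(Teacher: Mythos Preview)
Your approach differs substantially from the paper's, and in its current form has a genuine gap.

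The paper does not analyse the clique-width reduction gadget by gadget. Instead it factors the argument through two auxiliary results. First, Lemma~\ref{lem:col} observes that the known reduction establishing \W[1]-hardness of \textsc{Equitable Coloring} parameterized by treewidth (from~\cite{COL_HARD}) actually produces instances whose tree-decomposition has height at most~$3$. Second, Lemma~\ref{lem:tw_sd} shows that any class of graphs admitting tree-decompositions of width~$w$ and height~$d$ has shrub-depth at most~$d+1$. Combining these, the input graph $G$ to the \textsc{Equitable Coloring}~$\to$~\textsc{Graph Coloring} reduction of~\cite[Theorem~3.2]{FominGLS09} already has a tree-model of height~$4$ with a bounded number of colours. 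The remaining work is then only to check that this particular reduction---which attaches to each vertex of $G$ auxiliary vertices of colours $S_i$, $P_i$, and adds a clique $P_M$ and cliques $C_i$ with prescribed adjacencies---increases the height of the tree-model by exactly one, yielding depth~$5$.

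Your plan misses this structure. The reduction in question does not start from \textsc{Multicolored Clique}; it starts from \textsc{Equitable Coloring} on an \emph{arbitrary} bounded-treewidth graph $G$, and this entire graph $G$ sits inside the output as an induced subgraph. Your assertion that ``each gadget, viewed in isolation, is a graph of bounded neighbourhood diversity'' therefore fails at the most important place: $G$ is not a constant-size gadget, and a bounded-treewidth graph need not have bounded neighbourhood diversity (nor bounded shrub-depth, for that matter). Without first restricting $G$ to have a bounded-height tree-decomposition (Lemma~\ref{lem:col}) and then invoking Lemma~\ref{lem:tw_sd} to obtain a bounded-depth tree-model for $G$ itself, you have no handle on the depth of the output. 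Propositions~\ref{pro:sd1}--\ref{pro:sd3} cannot rescue this on their own, since each of them presupposes that the base graph already comes equipped with a tree-model of bounded height.
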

  We will prove the above theorem after showing the following lemma.
  \begin{LEM}
    \label{lem:col} \textsc{Equitable Coloring} is \W[1]-hard
    parameterized by the variant of treewidth, where the height of the
    tree (of the tree-decomposition) is at most $3$.
  \end{LEM}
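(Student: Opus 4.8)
The plan is to reduce from \textsc{Unary Bin Packing}, i.e.\ the problem of deciding whether $n$ positive integers $t_1,\dots,t_n$ given in unary can be partitioned into $k$ groups each of sum exactly $(\sum_i t_i)/k$; this problem is known to be \W[1]-hard parameterized by the number of bins $k$ (Jansen, Kratsch, Marx and Schlotter). Given such an instance (we may assume $k\mid\sum_i t_i$, as otherwise it is a trivial no-instance), I would construct a graph $G$ together with the requirement that it be colored with exactly $k$ colors, and show that $G$ admits an \emph{equitable} $k$-coloring if and only if the desired partition exists.

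The building block is one gadget $H_i$ per item. The gadget consists of a clique $Q_i$ on $k-1$ vertices and an independent set $I_i$ on $t_i+1$ vertices, with all edges present between $Q_i$ and $I_i$; the graph $G$ is the disjoint union of $H_1,\dots,H_n$. The point of the gadget is a forcing property: in any proper $k$-coloring the $(k-1)$-clique $Q_i$ uses $k-1$ distinct colors, so exactly one color $c_i$ is missing from $Q_i$, and since every vertex of $I_i$ is adjacent to all of $Q_i$, the whole set $I_i$ is forced to take color $c_i$. Conversely $c_i$ can be made any of the $k$ colors. A proper $k$-coloring of $G$ therefore corresponds precisely to an assignment $i\mapsto c_i$ of each item to one of $k$ bins.

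The correctness then rests on a single counting identity. Within gadget $H_i$ the clique $Q_i$ places exactly one vertex into every color other than $c_i$, while the $t_i+1$ vertices of $I_i$ all land in color $c_i$. Summing over all gadgets, color class $c$ receives one vertex from each of the $n-m_c$ gadgets that chose a different bin, plus $t_i+1$ vertices from each of the $m_c$ gadgets assigned to $c$, where $m_c=|\{i:c_i=c\}|$; this totals exactly $n+L_c$ vertices, with $L_c=\sum_{i:c_i=c}t_i$ the load of bin $c$. Since $G$ has $N=nk+\sum_i t_i$ vertices and $k\mid N$, an equitable $k$-coloring is one in which every class has size exactly $N/k=n+(\sum_i t_i)/k$; by the identity this happens if and only if $L_c=(\sum_i t_i)/k$ for every $c$, i.e.\ if and only if the $c_i$ realize an equal partition. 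This gives the equivalence in both directions.

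Finally I would exhibit the promised shallow tree decomposition. Each gadget $H_i$ has the depth-two decomposition whose root bag is $Q_i$ and which has one child bag $Q_i\cup\{v\}$ for every $v\in I_i$; every bag has at most $k$ vertices. Attaching all of these gadget-decompositions as subtrees of one common (empty) root bag produces a tree decomposition of $G$ whose width is $k-1$ and whose tree has height at most $3$. Hence the bounded-height variant of treewidth is at most $k-1$, so the construction is a valid \FPT-reduction, and it runs in polynomial time because the sizes are encoded in unary. The one genuinely delicate point, which I expect to be the main obstacle, is the interaction between forcing and counting: the clique needed to force $I_i$ monochromatic unavoidably spends one vertex of every non-chosen color and thus perturbs the class sizes, and the reduction only works because padding each $I_i$ to size $t_i+1$ makes these perturbations cancel uniformly (contributing the additive constant $n$ to every class). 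One must also check that the forcing argument leaves no room for an \emph{irregular} equitable coloring---for instance one using fewer than $k$ colors---but this is ruled out because a balanced $k$-coloring of a graph with $k\mid N$ has every class nonempty, so all $k$ colors are present and the per-gadget analysis applies unchanged.
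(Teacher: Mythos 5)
Your proof is correct, but it takes a genuinely different route from the paper. The paper does not construct anything new: it simply reuses the known reduction of Fellows et al.\ (the reference \cite{COL_HARD}, Theorem~5) establishing \W[1]-hardness of \textsc{Equitable Coloring} parameterized by treewidth, and then observes that the graphs produced there (a disjoint union of height-$2$ trees, plus isolated vertices, plus a clique joined to some vertices) admit a tree-decomposition of height at most $3$ once the clique is placed into every bag. You instead give a self-contained reduction from \textsc{Unary Bin Packing} parameterized by the number of bins $k$, and your gadget analysis, counting identity $|{\lambda^{-1}(c)}|=n+L_c$, and the width-$(k-1)$, height-$2$ decomposition all check out; the two caveats you flag (the additive perturbation from the cliques, and ruling out degenerate colorings via $k\mid N$ and $N\geq k$) are exactly the right ones and are handled correctly. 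Comparing the two: the paper's argument is shorter but leans entirely on an external construction whose details the reader must consult to verify the height bound; yours is longer but elementary, transparent, and yields a slightly stronger conclusion, since your instances also have the number of colors bounded by the parameter (which is in fact what the downstream application to \textsc{Graph Coloring} on shrub-depth needs, as the color budget $r$ enters the final color count $m+4r$ of the tree-model). The only external ingredient you need is the \W[1]-hardness of \textsc{Unary Bin Packing} for parameter $k$ (Jansen, Kratsch, Marx, Schlotter), which is a standard and correct starting point.
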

  \begin{proof}
    Our proof uses the same reduction as the proof of~\cite[Theorem
    5.]{COL_HARD}. We only show that the tree-decompositions of
    the graphs constructed there have height at most $3$.

    The graph $G'$ constructed in~\cite{COL_HARD} consists of a disjoint
    union of trees of height $2$. Hence, the  tree-decomposition of $G'$ has
    height at most $3$. The next step in their
    construction is to add an appropriate number of isolated vertices to
    $G'$. This step does not increase the height of the tree-decomposition
    any further.
    
    The final step of their reduction consists of adding a clique and
    connecting some vertices of the clique to some vertices of the graph
    constructed above. Hence, a tree-decomposition of height
    $3$ for the whole graph can be obtained by adding all vertices of
    the clique to every bag of the already constructed
    tree-decomposition.
    \shortversion{\qed}\end{proof}

  \begin{proof}[Proof of Theorem~\ref{the:coloring-hard}]
    Our proof uses the same reduction from \textsc{Equitable Coloring}
    on graphs of bounded treewidth as the proof of~\cite[Theorem
    3.2]{FominGLS09}. We only show that the graphs constructed there
    have tree-models of height at most $5$ that use at most $f(w)$ colors,
    where the input graph $G$ has a tree-decomposition of width $w$ and
    height $3$.

    For the input graph $G$, we
    consider its tree-model $\textup{TM}_G$ of height $4$ and with at most
    $m$ colors (for some $m$ which does not depend on $G$) whose existence
    is guaranteed by Lemma~\ref{lem:tw_sd} and Lemma~\ref{lem:col}.  We
    replace each leaf of $\textup{TM}_G$ by a tree of height one which has
    the following leaves:
    \begin{itemize}
    \item vertex $u$ which was replaced from the original tree-model, with
      the same color as $u$ had in $\textup{TM}_G$
    \item $r$ vertices colored by $S_1$ to $S_r$
    \item $r$ vertices colored by $P_1$ to $P_r$
    \end{itemize} We create a tree-model for the clique $P_M$ of size $r$, where
    each vertex has a different color $p_i$, $1\le i \le r$. Finally, we
    create tree-models of height 1 for cliques $C_i$ ($1\le i \le r$) of
    size $n\frac{r-1}{r}$, each of them fully colored by color $c_i$.
    
    For each $i \in \{1,\ldots r\}$, the union of all vertices colored by $S_i$ ($P_i$)
    corresponds
    to the set $S_i$ ($P_i$) in~\cite[Theorem3.2]{FominGLS09}, respectively. 
    Similarly graphs $P^M, C_1, \ldots C_r$ correspond to their counterparts
    of the same name in~\cite[Theorem3.2]{FominGLS09}.
    
    We connect the roots of tree-models
    of $P^M, C_1, \ldots C_r$ to the root of $\textup{TM}_G$ (by a path
    of appropriate length, so that the leaf-to-root distance is the same
    in the whole tree) to obtain $\textup{TM}_G'$.

    The height of $\textup{TM}_G'$ is $5$, because the height of
    $\textup{TM}_G$ was $4$ and during the construction the
    height increased by one (by replacing leaves by
    trees of height one).

    Now we describe the edges and their dependencies in $\textup{TM}'_G$:
    \begin{itemize}
    \item all edges from the original graph remain the same, $\textup{TM}'$ clearly
      encodes this information, since the only thing that changed is that
      the distance between every pair $u,v$ of leaves from the original
      tree-model increased by $2$;
    \item we make each $u$ adjacent to each vertex with label $P_i$ (for
      all $i \in \{1, \ldots , r \}$, in the whole tree-model), i.e. for all
      distances between leaves, all the vertices colored by colors of the original
      tree-model are adjacent to all the vertices colored by $P_i$;
    \item we make each $u$ adjacent to ``its'' vertices colored by $S_i$,
      (for all $i \in \{1, \ldots , r \})$ i.e. for distance $2$, all
      vertices which have colors from the original tree-model are adjacent
      to all the vertices colored by $S_i$, $i \in \{1, \ldots , r \}$;
    \item we make each vertex colored by $S_i$ adjacent to the whole clique
      $P_M$ except for a vertex colored by $p_i$, i.e. for distance $10$,
      each vertex colored by $S_i$ is adjacent to each vertex colored by
      $p_j$, except when $i=j$;
    \item for distance 2, vertices colored $P_i$ and $S_j$ are not
      adjacent if $i=j$, and are adjacent if $i \neq j$. For distance higher
      than $2$, vertices colored by $P_i$ and $S_j$ are adjacent, whether
      $i=j$ or not;
    \item we make, for each $i$, every vertex of $C_i$ adjacent to all
      vertices of colored by $S_i$, i.e. for distance 10, vertices colored
      by $c_i$ and $S_i$ are adjacent, for $i \in \{1, \ldots , r \}$
    \item we make, every vertex of $C_i$ adjacent to all vertices colored
      by $P_j$, $i \neq j$, i.e.  for distance 10, vertices colored by
      $c_i$ and $P_j$, for all $i,j \in \{1, \ldots , r \}$ are adjacent,
      except when $i=j$.
    \end{itemize}
    The reduction from ~\cite[Theorem3.2]{FominGLS09} is now completed, and
    $\textup{TM}_G'$ represents the resulting graph of this reduction.
    We used at most $m$ colors of the tree-model of original graph and $4r$ new colors
    $S_i, P_i, c_i, p_i$. The number of colors in $\textup{TM}_G'$ therefore bounded
    by $m + 4r$, which concludes the proof.
    \shortversion{\qed}\end{proof}
}

\shortversion{
  \begin{THE}[$\star$]\label{ham_hard}
    \textsc{Hamiltonian Path} parameterized by the number of colors
    (used in a tree-model of the input graph) is $W[1]$-hard on class of graphs
    of shrub-depth $5$.
  \end{THE}
}
\longversion{
  \begin{THE}\label{ham_hard}
    \textsc{Hamiltonian Path} parameterized by the number of colors
    (used in a tree-model of the input graph) is $W[1]$-hard on class of graphs
    of shrub-depth $5$.
  \end{THE}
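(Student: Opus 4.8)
The plan is to mirror the proof of Theorem~\ref{the:coloring-hard}: instead of designing a fresh reduction, we reuse the known \W[1]-hardness reduction for \textsc{Hamiltonian Path} parameterized by clique-width from~\cite{FominGLS10}, and we argue that the graphs it produces in fact have shrub-depth at most $5$, i.e., admit a tree-model of height $5$ whose number of colors is bounded by a function of the parameter. Since the correctness of that reduction (the equivalence of the resulting \textsc{Hamiltonian Path} instances) is already established in~\cite{FominGLS10}, the only thing left to verify is the structural claim about the output graphs, exactly as was done for the \textsc{Coloring} gadgets in Theorem~\ref{the:coloring-hard}.

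First I would recall the anatomy of the construction in~\cite{FominGLS10}. Its output graph $G'$ is assembled from a small ``skeleton'' together with families of gadgets that are glued onto the skeleton by complete joins and by selected bipartite connections, where within each family the gadgets are mutually interchangeable and are indexed by vertices of a fixed set. This is precisely the layered, join-based shape to which Propositions~\ref{pro:sd1}, \ref{pro:sd3} and~\ref{pro:sd2} apply. Moreover, the skeleton has both bounded treewidth and bounded tree-decomposition height, so by Lemma~\ref{lem:tw_sd} it already admits a tree-model of bounded height using a number of colors depending only on the parameter.

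The core of the argument is then an inductive assembly of a tree-model for $G'$. Starting from the bounded-height tree-model of the skeleton, I would introduce each family of gadgets by a single application of the appropriate proposition: a complete join between a gadget and a part of the current graph is handled by Proposition~\ref{pro:sd3}, an arbitrary bipartite connection between a small part and the rest by Proposition~\ref{pro:sd1}, and the pattern ``copies of a fixed gadget indexed by the vertices of a set, each attached to its own index'' (the standard way selection gadgets enter such reductions) by Proposition~\ref{pro:sd2}. Each such step raises the height by at most one and multiplies or increases the number of colors by a factor depending only on the gadget size and the parameter, so after the constantly many assembly stages used in~\cite{FominGLS10} we obtain a tree-model of height at most $5$ using $f(k)$ colors, which witnesses shrub-depth $5$.

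The main obstacle will be bookkeeping rather than any new idea: one must check that the handful of small modifications needed to route the gadgets of~\cite{FominGLS10} through Propositions~\ref{pro:sd1}--\ref{pro:sd2} do not violate condition~(\ref{it:tree-model-edge}) of Definition~\ref{def:tree-model}, namely that every adjacency in $G'$ is determined purely by the colors of its endpoints together with their tree-distance, and simultaneously that the total height never exceeds $5$. In particular, adjacencies that in the original reduction depend on the identity of a specific gadget copy must be re-expressed as dependencies on a bounded palette of colors together with the fixed pairwise tree-distances induced by the layering. Verifying this for each gadget type, and confirming that the various fixed distances (as in the distance-based case analysis in the proof of Theorem~\ref{the:coloring-hard}) never create spurious edges, is the delicate part of the proof.
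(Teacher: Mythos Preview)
Your plan is essentially the paper's: reuse the reduction of~\cite[Theorem~5.1]{FominGLS10} and verify, via Lemma~\ref{lem:tw_sd} and Propositions~\ref{pro:sd1}--\ref{pro:sd2}, that its output has a tree-model of height~$5$ with a parameter-bounded number of colors. The high-level description of how the gadget layers are absorbed step by step matches what the paper does.

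There is, however, one point you gloss over that the paper treats as a separate lemma. The reduction in~\cite{FominGLS10} is not a direct reduction from a canonical \W[1]-hard problem to \textsc{Hamiltonian Path}; it is a reduction from \textsc{Capacitated Dominating Set} parameterized by treewidth, and the structure of the output graph $G'$ depends on the \emph{input} tree-decomposition of the \textsc{Capacitated Dominating Set} instance. Your assertion that ``the skeleton has both bounded treewidth and bounded tree-decomposition height'' is therefore not automatic: it holds only if the source instances of \textsc{Capacitated Dominating Set} can be assumed to have tree-decompositions of bounded height (here, height~$2$) while the problem remains \W[1]-hard. The paper establishes exactly this as a standalone step (Proposition~\ref{cap_dom_set}), by going back to the hardness proof of~\cite{CAP_DOM_SET} and checking that the graphs produced there admit height-$2$ tree-decompositions. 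Without this preliminary observation, Lemma~\ref{lem:tw_sd} gives you no control over the height of the resulting tree-model, and the whole bound of~$5$ collapses. So your outline is right, but be sure to include this intermediate hardness check rather than treating the bounded height of the skeleton as given.
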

  We observe that the reduction showing $W[1]$-hardness of 
  \textsc{Capacitated Dominating Set} on bounded treewidth given
  in~\cite{CAP_DOM_SET}
  actually shows that this problem is $W[1]$-hard with respect to
  the variant of treewidth, where the height of the
  tree (of the tree-decomposition) is at most $2$.

  We then proceed by showing that the reduction given
  in~\cite{FominGLS10} from \textsc{Capacitated Dominating Set}
  parameterized by treewidth to
  \textsc{Hamiltonian Path} parameterized by cliquewidth produces graphs
  of shrub-depth at most $5$.

  \begin{PRO}\label{cap_dom_set} 
    \textsc{Capacitated Dominating Set} is $W[1]$-hard
    when parametrized by the variant of treewidth, where the height of the
    tree (of the tree-decomposition) is at most $2$.
  \end{PRO}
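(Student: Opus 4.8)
The plan is to reuse, verbatim, the reduction from \cite{CAP_DOM_SET} that establishes $\W[1]$-hardness of \textsc{Capacitated Dominating Set} parameterized by treewidth. That reduction starts from a $\W[1]$-hard source problem (e.g.\ \textsc{Multicolored Clique}, parameterized by the number of colors $k$) and produces an instance whose treewidth is bounded by a function of $k$. The correctness and the parameter bound are therefore already supplied by \cite{CAP_DOM_SET}; the only thing left to verify is that the graphs produced there admit a tree-decomposition whose \emph{tree} has height at most $2$, without blowing up the width beyond a function of $k$. This mirrors exactly the situation of Lemma~\ref{lem:col}, where the same technique is applied to a coloring reduction.

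First I would isolate the global structure of the constructed graph. In reductions of this kind the vertex set decomposes into a ``shared skeleton'' $B$ of size bounded by a function of $k$ (the color-selection vertices together with the auxiliary guard/capacity vertices that several gadgets talk to), and a collection of pairwise almost-independent gadgets, one per color class and one per pair of color classes, each of bounded size, whose only interaction with the rest of the graph is through $B$. Second, following precisely the trick from the proof of Lemma~\ref{lem:col}, I would build the decomposition as a star: a root bag containing all of $B$, and for every gadget a single child whose bag is $B$ together with that gadget's private vertices (adding at most one further leaf if a gadget itself needs two levels). Every edge is then covered, since each edge is either internal to a single gadget or has an endpoint in $B$; the connectivity condition holds for the vertices of $B$ because $B$ lies in every bag, and for each gadget's private vertices because they occur in only one branch. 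The width is at most $|B|$ plus the largest gadget size, still a function of $k$, while the height of the tree is at most $2$.

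The main obstacle I anticipate is the bookkeeping needed to confirm that the reduction genuinely follows this ``shared skeleton plus disjoint small gadgets'' pattern. In some formulations the edge-verification gadgets are chained together (for instance strung along an ordering of the color pairs), which would a priori force a long decomposition path and hence unbounded height. The real work is therefore to check that any such chaining passes only through vertices that can be safely absorbed into $B$ and thus placed into every bag; once that is verified, the long path collapses into the shallow star described above, the height bound of $2$ follows immediately, and the $\W[1]$-hardness is inherited unchanged from \cite{CAP_DOM_SET}.
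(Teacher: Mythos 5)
Your proposal takes essentially the same route as the paper: the paper's proof also reuses the reduction of \cite{CAP_DOM_SET} unchanged and builds the height-$2$ decomposition by absorbing a bounded-size ``skeleton'' into every bag, concretely the sets $S_i$, $S_{i,j}$ and the vertices $x_i$, $y_i$, $z_i$, $p_{i,j}$, $q_{i,j}$, whose deletion leaves a forest of height $1$ and hence a star-like decomposition of height $2$ and width $O(k^4)$. The bookkeeping you flag as the remaining obstacle is resolved exactly as you anticipate, so the argument goes through.
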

  \begin{proof} 
    Our proof uses the same reduction as the proof of~\cite[Theorem
    1.]{CAP_DOM_SET}. We only show that the tree-decompositions of
    the graphs constructed there have height at most $2$.
    Indeed, in the graph $H$
    constructed in~\cite[Section 3.1.]{CAP_DOM_SET}, delete the sets $S_i$
    and $S_{i,j}$ and all the vertices $x_i$, $y_i$, $z_i$, $p_{i,j}$,
    $q_{i,j}$. After this, we are left with a forest of height $1$, which
    has a tree-decomposition of height $2$. After adding all deleted
    vertices to each bag, we obtain a tree-decomposition of the whole
    graph of height $2$ and of width $O(k^4)$.
    \shortversion{\qed}\end{proof}

  \begin{proof}[Proof of Theorem~\ref{ham_hard}]
    We use the reduction from~\cite[Theorem 5.1]{FominGLS10}. We show that when the
    instances of \textsc{Capacitated Dominating Set} have a tree-decomposition of
    height $2$, then the graphs we obtain by their reduction have
    shrub-depth at most $5$.

    First we argue that, when we begin with the graph $G$ which has a
    tree-decomposition of height $2$, the graph $G'(c')$ has a		
    tree-decomposition of height $3$. The graph $G'(c')$ is the graph obtained
    after the first step of the reduction with an assumption that the capacity of each
    vertex is exactly one. This graph is basically obtained by replacing
    each vertex of $G$ by a gadget and replacing each edge of $G$ by
    two gadgets. Since these gadgets have constant size, one can just
    replace each occurrence of vertex $v$ in the tree-decomposition of $G$ by a
    gadget, and ``hang'' the bag containing the pair of gadgets corresponding to an edge
    $u,v$ in $G$ to any bag which contained these two vertices in the original
    tree-decomposition. This increases the height of the tree-decomposition by at most one.

    By Lemma~\ref{lem:tw_sd}, the resulting graph has a tree-model $\textup{TM}_1$ of
    height $4$ and at most $m$ colors, for some natural $m$ (which depends only on
    the height of the newly created tree-decomposition (3) and its width $w'$, which is 
    bounded by a constant multiple of $w$). 
    
    Graph $G'(c)$ is obtained from $G'(c')$ by adding
    vertices, each of which is a twin of some vertex from $G'(c')$. 
    Even though this can in general affect the number of colors in the tree-model
    of a graph, in this particular instance it is not the case -- each vertex 
    to which we add twin vertices has two neighbors and is the only vertex which
    has this neighborhood. We can therefore construct the tree-model $\textup{TM}_2$ of
    $G'(c)$ by simply
    adding siblings (with the same color) to the corresponding leafs of $\textup{TM}_1$.

    The second part of the reduction consists of introducing a graph which consists
    of 2 new vertices connected by an appropriate number of paths of length
    two. Let $Y$ denote the set of middle vertices of these paths. 
    All vertices from $Y$ are connected to some subset of $V(G'(c))$.
    It is easy to construct a tree-model of height 1 with 2 colors for this graph
    (root, one color for vertices of $Y$ and one color for remaining two vertices)
    and therefore by Proposition~\ref{pro:sd2} there exists a tree-model $\textup{TM}_3$
    of the resulting graph of height $4$ and with $2m + 4$ colors. 
    
    Next, to some subset of vertices of the graph induced by $\textup{TM}_3$, a copy of a gadget of
    fixed size $c$ is connected (in the same fashion for all vertices). The tree-model
    of this gadget of height 1 with $c$ colors is easy to construct and therefore 
    by Proposition~\ref{pro:sd3} the tree-model $\textup{TM}_4$ of the graph obtained in
    this way has height $5$ and at most $2m + 4 + c$ colors.
    
    After this, two specific vertices $x$ and $y$ of each of these newly added gadgets are
    connected to the whole $Y$. By the construction from the proof of Proposition~\ref{pro:sd3}
    each vertex from $Y$ now has one of two colors in  $\textup{TM}_4$ and no vertex outside
    $Y$ has any of these colors. By the construction from the proof of
    Proposition~\ref{pro:sd2} the copies of vertices $x$ and $y$ have the same colors
    in the whole $\textup{TM}_4$, and no other vertex has this color. It follows that
    these newly added edges depend only on the colors of the leaves of the tree-model.

    The final step of reduction consists of adding an independent set,
    which is connected to some subset of $G'$. An independent set
    has tree-model of height $1$ and $1$ color and therefore by Proposition~\ref{pro:sd3}
    the resulting graph has tree-model $\textup{TM}_5$ of height 5 and at most $2(2m + 4 + c) + 2$ colors.
    
    Since neither $m$ nor $c$ depend on the choice of $G$, the theorem is now proved.
    \shortversion{\qed}\end{proof}
}

%Theorems~\ref{the:coloring-hard} and~\ref{ham_hard} suggest that 

\section{Modular-Width and Other Parameters}

In this section we study the relationships of modular-width,
shrub-depth and other important width parameters. Of particular
importance is the observation that modular-width generalizes the
recently introduced parameters neighborhood diversity~\cite{Lampis12}
and twin-cover~\cite{Ganian11}. Both of these parameters have been
introduced to obtain FPT algorithms on dense graphs for
problems that are hard for
clique-width. Figure~\ref{fig:width-parameters} summarizes these
relationships. Most of these relationships are well-known or have
recently been shown
in~\cite{Lampis12,Ganian11,GanianHNOMR12,CourcelleMR00}. Consequently,
we only show the relationships whose proofs cannot been found anywhere
else.
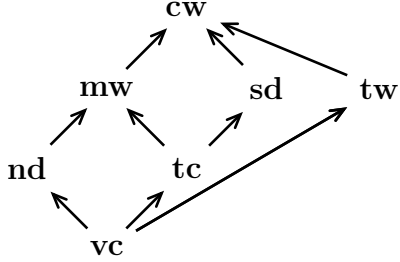
\begin{figure}
  \begin{center}
    \begin{tikzpicture}[node distance=1.5cm,>=angle 45]
      \tikzstyle{every node}=[]
      \tikzstyle{every edge}=[draw,line width=1pt,->]
      \large \bf
      \draw
      node (vc) {\vc}
      node[above left of=vc] (nd) {\nd}
      node[above right of=vc] (tc) {\tc}
      node[above right of=nd] (mw) {\mw}
      node[above right of=tc] (sd) {\sd}
      node[right of=sd] (tw) {\tw}
      node[above right of=mw] (cw) {\cw}

      (vc) edge (nd)
      (vc) edge (tc)
      (vc) edge (tw)
      (nd) edge (mw)
      (tc) edge (mw)
      (tc) edge (sd)
      (vc) edge (tw)
      (tw) edge (cw)
      (sd) edge (cw)
      (mw) edge (cw)
      ;
    \end{tikzpicture}
  \end{center}
  \caption{Relationship between vertex cover (vc), neighborhood
    diversity (nd), twin-cover (tw), modular-width (mw), shrub-depth
    (sd), treewidth (tw), and clique-width (cw). Arrows indicate
    generalizations, e.g., modular-width generalizes both neighborhood
  diversity and twin-cover.}
  \label{fig:width-parameters}
\end{figure}
\begin{THE}\label{the:rel-mw-nd-tc}
  Let $G$ be a graph. Then $\mw(G) \leq \nd(G)$ and $\mw(G) \leq 2^{\tc(G)}+\tc(G)$.
  Furthermore, both inequalities are strict, i.e., there are graphs
  with bounded modular-width and unbounded neighborhood diversity (or
  unbounded twin-cover number).
\end{THE}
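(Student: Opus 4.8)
The plan is to prove the two upper bounds constructively, by exhibiting in each case an algebraic expression over (O1)--(O4) whose only use of (O4) is a single top-level substitution of controlled arity, and then to prove strictness by computing $\mw$, $\nd$ and $\tc$ on two explicit families. The common engine is that cliques, independent sets, and disjoint unions of cliques can all be assembled using only (O1)--(O3), so they never contribute to the width; all the width will sit in one quotient substitution.

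For $\mw(G)\le\nd(G)$ I would start from the partition of $V(G)$ into $k=\nd(G)$ neighborhood-diversity classes $V_1,\dots,V_k$. By definition each $V_i$ is a module inducing either a clique or an independent set, and any two classes are either completely joined or completely non-adjacent. Letting $Q$ be the quotient graph on $q_1,\dots,q_k$ with $q_i$ adjacent to $q_j$ exactly when $V_i,V_j$ are completely joined, I get $G=Q(H_1,\dots,H_k)$ with $H_i=G[V_i]$. Each $H_i$ is built by repeated (O3) (clique) or repeated (O2) (independent set) from (O1), with no use of (O4), so the sole occurrence of (O4) is the substitution into $Q$, which has $k$ operands; hence the width is $k=\nd(G)$.

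For $\mw(G)\le 2^{\tc(G)}+\tc(G)$ I would fix a minimum twin-cover $X$ with $t=|X|=\tc(G)$. The structural core is that $G\setminus X$ is a disjoint union of cliques, each clique being a set of mutual true twins of $G$, so $\sigma(v)=N_G(v)\cap X$ is constant on each component. Grouping the vertices outside $X$ by the value of $\sigma$ gives at most $2^t$ sets $W_S$, one per realized $S\subseteq X$, and I would check that each $W_S$ is a module of $G$: every $x\in X$ is adjacent to all of $W_S$ if $x\in S$ and to none otherwise, while any vertex of some $W_{S'}$ with $S'\ne S$ lies in a different component of $G\setminus X$ and is therefore adjacent to none of $W_S$. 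Then $G$ equals the substitution into the quotient $Q$ on the $t$ singletons $\{x\}$ ($x\in X$) together with one vertex per nonempty $W_S$; each operand $G[W_S]$ is a disjoint union of cliques built without (O4), and the single top-level substitution has at most $t+2^t$ operands, giving width at most $2^{\tc(G)}+\tc(G)$.

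For strictness I would exhibit two families whose expressions never use (O4), so their modular-width is bounded by a constant. For the first inequality, take the disjoint union of $n$ edges $M_n=K_2\oplus\dots\oplus K_2$: each edge is a pair of true twins forming its own neighborhood-diversity class, so $\nd(M_n)=n$ is unbounded. For the second, take $K_{n,n}=\overline{K_n}\otimes\overline{K_n}$: no edge of $K_{n,n}$ joins a pair of twins, so every twin-cover must be a vertex cover, whence $\tc(K_{n,n})=n$ is unbounded. I expect the main obstacle to be the module claim for the sets $W_S$ in the twin-cover bound: it is exactly the point where one must exploit that a twin-cover forces the components of $G\setminus X$ to be cliques of true twins sharing a common neighborhood in $X$, and that distinct components are mutually non-adjacent; once this is established, the clique/union constructions and the two parameter computations are routine.
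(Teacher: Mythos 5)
Your proposal is correct and follows essentially the same route as the paper: both upper bounds are obtained by the identical constructions (substituting the neighborhood-diversity classes into the quotient graph, and substituting the twin-cover vertices as singletons together with the at most $2^{\tc(G)}$ modules of cover-equivalent clique components, with each operand built from (O1)--(O3) alone). The only divergence is in the strictness claim, where the paper cites a single family of co-graphs from the shrub-depth literature while you give two explicit self-contained families ($n$ disjoint edges and $K_{n,n}$), both of which are verified correctly and arguably make the argument more transparent.
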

\begin{proof}
  Let $G$ be a graph. Using the definition of neighborhood diversity
  from~\cite{Lampis12} it follows that $G$ has a partition
  $\{V_1,\dots,V_{\nd(G)}\}$ of its vertex set such that for every $1
  \leq i \leq \nd(G)$ it holds that the graph $G[V_i]$ is
  either a clique or an independent set and for every $1 \leq i < j \leq
  \nd(G)$, either all vertices in $V_i$ are adjacent to all vertices
  of $V_j$ or $G$ contains no edges between vertices in $V_i$ and
  vertices in $V_j$. Let $G'$ be the graph with vertex set
  $v_1,\dots,v_{\nd(G)}$ and an edge between $v_i$ and $v_j$ if and
  only if the graph $G$ contains all edges between vertices in $V_i$
  and vertices in $V_j$. Then
  $G=G'(G[V_1],\dots,G[V_{\nd(G)}])$. Furthermore, because for every
  $1 \leq i \leq \nd(G)$, $V_i$ is either a clique or an independent
  set, we can obtain the graph $G[V_i]$ from an algebraic expression $A_i$
  that uses only the operations (O1)--(O3). Substituting the
  algebraic expressions $A_i$ for every $G[V_i]$ into
  $G'(G[V_1],\dots,G[V_{\nd(G)}])$ gives us the desired algebraic
  expression for $G$ of width $\nd(G)$.

  Let $G$ be a graph. Using the definition of twin-cover
  from~\cite{Ganian11} it follows that there is a set $C$ of at most
  $\tc(G)$ vertices of $G$ such that every component $C'$ of $G
  \setminus C$ is a clique and every vertex in $C'$ is connected to
  the same vertices in $C$. Let $C_1,\dots,C_l$ be sets of components of $G
  \setminus C$ such that $2$ components of $G \setminus C$ are
  contained in the same set $C_i$ if and only if their vertices have
  the same neighbors in $C$. Because there are at most $2^{|C|}$ 
  possible such neighborhoods, we obtain that $l \leq 2^{|C|}$. Let $G'$ be the graph with vertices
  $C \cup \{c_1,\dots,c_l\}$ and edges $E(G[C]) \cup \SB \{c_i,v_i\}
  \SM v_i \in N_G(C_i) \SE$. Then
  $G=G'(v_1,\dots,v_{|C|},G[C_1],\dots,G[C_l])$. Furthermore, because
  the graphs $G[C_i]$ are disjoint unions of cliques, we can obtain
  each of these graphs from an algebraic expression $A_i$ that uses
  only the operations (O1)--(O3). Substituting the
  algebraic expressions $A_i$ for every $G[C_i]$ into
  $G'(v_1,\dots,v_{|C|},G[C_1],\dots,G[C_l])$ gives us the desired algebraic
  expression for $G$ of width $2^{|C|}+|C|\leq 2^{\tc(G)}+\tc(G)$.

  To see that both inequalities are strict we refer the reader
  to~\cite[Example 5.4 a)]{GanianHNOMR12}. The example exhibits a
  family of co-graphs, i.e., graphs of modular-width $0$, with unbounded
  neighborhood diversity and unbounded twin-cover number.
\shortversion{\qed}\end{proof}
The following theorem shows that modular-width and shrub-depth are
orthogonal to each other.
\shortversion{
  \begin{THE}[$\star$]\label{the:rel-mw-sd}
    There are classes of graphs with unbounded modular-width and bounded
    shrub-depth and vice versa.
  \end{THE}
}
\longversion{
  \begin{THE}\label{the:rel-mw-sd}
    There are classes of graphs with unbounded modular-width and bounded
    shrub-depth and vice versa.
  \end{THE}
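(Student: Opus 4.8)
The plan is to prove the two directions separately, each by exhibiting an explicit family. For ``unbounded $\mw$ but bounded $\sd$'' I would use the \emph{crown graphs} $S_n$ (the complete bipartite graph $K_{n,n}$ minus a perfect matching): vertices $a_1,\dots,a_n,b_1,\dots,b_n$ with $a_ib_j$ an edge iff $i\neq j$. For the reverse, ``bounded $\mw$ but unbounded $\sd$'', I would use co-graphs, which by the remark preceding Definition~\ref{def:shrub-depth} have modular-width $0$, together with a co-graph family whose shrub-depth provably grows.

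For the crown graphs, first I would bound the shrub-depth by giving an explicit tree-model of depth $2$ using only $2$ colors: take a root whose $n$ children $p_1,\dots,p_n$ each have two leaf-children, $a_i$ colored $A$ and $b_i$ colored $B$, and set the adjacency rule so that two leaves at distance $4$ (different groups) carrying colors $A$ and $B$ are adjacent, while every other color/distance combination is non-adjacent. This realizes exactly $S_n$, so the whole family lies in $\TM{2}{2}$ and hence has shrub-depth at most $2$. Second, I would show $\mw(S_n)=2n$ by arguing that $S_n$ is prime for $n\geq 3$, i.e.\ it has no non-trivial module: if a module $M$ contained two vertices $a_i,a_k$, then $b_i$ (adjacent to $a_k$ but not to $a_i$) would be forced into $M$; iterating, and using that every outside $a_j$ is adjacent to some but not all of the vertices already forced in, collapses $M$ to all of $V(S_n)$, and the cases of two $b$-vertices or of one $a$ and one $b$ are symmetric. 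Since the unique prime node of the modular decomposition then has $2n$ children, $\mw(S_n)=2n\to\infty$.

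For the reverse direction the modular-width is immediate: any co-graph is built from operations (O1)--(O3) alone and so has modular-width $0$. The content is to produce co-graphs of unbounded shrub-depth. I would take the family obtained by alternating the two co-graph operations, e.g.\ $G_0=K_1$, $G_{2i+1}=G_{2i}\otimes G_{2i}$, and $G_{2i+2}=G_{2i+1}\oplus G_{2i+1}$, so that the reduced cotree of $G_k$ has depth exactly $k$, and claim that these have unbounded shrub-depth, which I would either cite from \cite{GanianHNOMR12} or establish directly.

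The main obstacle is precisely this last lower bound: showing that no tree-model of bounded depth $d$ and bounded number of colors $m$ can realize $G_k$ once $k$ is large. The tool I would use is the following structural property of tree-models: if $x$ is an internal node, then any two leaves below $x$ receiving the same color have identical adjacencies to every leaf outside the subtree of $x$, because such a pair has the same distance to each outside leaf. Consequently a depth-$d$, $m$-color tree-model decomposes the graph into at most $d$ nested levels at each of which a part attaches to the rest through at most $m$ ``module types''. I would then argue that this bounded-depth, bounded-width nesting cannot reproduce the strictly alternating union/join structure of the cotree of $G_k$ for $k$ much larger than a function of $d$ and $m$, forcing the shrub-depth of the family to be unbounded. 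Turning this intuition into a clean quantitative bound, relating shrub-depth to reduced-cotree depth, is the delicate part of the argument.
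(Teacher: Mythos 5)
Your proposal is correct and follows essentially the same strategy as the paper's proof: for the first direction, exhibit a family of prime graphs (so that modular-width equals the number of vertices and hence is unbounded) together with an explicit depth-$2$ tree-model using a constant number of colors; for the second direction, invoke the co-graph family of \cite[Example 5.4~a)]{GanianHNOMR12}, whose modular-width is $0$ because only operations (O1)--(O3) are used. The only substantive difference is your choice of witness family --- crown graphs $K_{n,n}$ minus a perfect matching instead of the paper's subdivided stars --- and both your primality argument (the ``distinguisher forces membership'' iteration does collapse any module with two vertices to all of $V(S_n)$ for $n\geq 3$) and your two-color, depth-$2$ tree-model are correct. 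One warning: your alternative of establishing the unbounded shrub-depth of the alternating-cotree co-graphs ``directly'' is not a proof as written. The observation that same-colored leaves below a common internal node are twins with respect to the outside is correct, but the step from there to a quantitative lower bound on tree-model depth in terms of cotree alternation depth is exactly the hard content of the cited example, and you acknowledge it is missing. So as it stands you must take the citation route --- which is precisely what the paper does --- and the direct argument should be regarded as an unproven claim rather than an interchangeable option.
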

  \begin{proof}
    Let $S_n$ be the graph obtained from a star (with $n$ leaves) after
    subdividing every edge exactly once and let $\mathcal{S}$ be the
    class of all these graphs. Because $S_n$ is a prime graph, its
    modular-width equals its number of vertices, i.e.,
    $2n+1$. Consequently, $\mathcal{S}$ has unbounded modular-width. We
    next show that $\mathcal{S}$ has shrub-depth at most $2$ by showing
    that $S_n$ has a tree-model of height $2$ that uses at most $3$
    colors. The tree-model consists of a root $r$ that has $n+1$
    children, such that $n$ of these children (which correspond to the
    $n$ edges around the center of the star) have $2$ children themselves
    colored by color $1$ and $2$, respectively, and the remaining child
    has only $1$ child (which corresponds to the center of the star) 
    which is colored by color $2$. Then $S_n$ can be defined in this
    tree-model.
    
    To show the ``vice versa'' part of the theorem, we refer the reader
    to \cite[Example 5.4 a)]{GanianHNOMR12}. The example gives a family
    of co-graphs, i.e., graphs with modular-width $0$, that has
    unbounded shrub-depth.
    \shortversion{\qed}\end{proof}
}
The next theorem shows that also shrub-depth generalizes
neighborhood diversity and twin-cover.
\begin{THE}\label{the:rel-tc-nd-sd}
  Let $\mathcal{G}$ be a class of graphs. If $\mathcal{G}$ has bounded
  neighborhood diversity or bounded twin-cover number then
  $\mathcal{G}$ has shrub-depth at most $2$.  
  Furthermore, there are classes of
  graphs that have unbounded neighborhood diversity and twin-cover
  number but shrub-depth $2$.
\end{THE}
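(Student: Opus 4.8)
The plan is to treat the two sufficient conditions separately and, in each case, to exhibit for every $G \in \mathcal{G}$ a tree-model (Definition~\ref{def:tree-model}) of height $2$ whose number of colors is bounded by a function of the relevant parameter; this shows $\mathcal{G} \subseteq \TM{2}{m}$ for a suitable $m$ and hence, by Definition~\ref{def:shrub-depth}, that $\mathcal{G}$ has shrub-depth at most $2$. The ``furthermore'' part is then handled by pointing to a single explicit family. I will repeatedly use the elementary fact that prepending a single internal node above the root turns any height-$d$ tree-model into a height-$(d{+}1)$ one without changing pairwise leaf distances, so that $\TM{d}{m} \subseteq \TM{d{+}1}{m}$.

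For the neighborhood-diversity case I would first recall the partition $\{V_1,\dots,V_{\nd(G)}\}$ into cliques and independent sets with uniform adjacency between classes, exactly as used in the proof of Theorem~\ref{the:rel-mw-nd-tc}. I build a tree-model of height $1$ (which embeds into one of height $2$ by the remark above): the root has all vertices as leaf-children, and each leaf receives the color of its class. Every pair of leaves is then at the same distance, so adjacency must depend only on the pair of class-colors; this is consistent precisely because two vertices in the same class are either always adjacent (clique) or never adjacent (independent set), while two vertices in distinct classes are either all adjacent or all non-adjacent. This uses $\nd(G)$ colors.

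The twin-cover case is where the real work lies, and I expect the consistency check to be the main obstacle. Using the characterization from the proof of Theorem~\ref{the:rel-mw-nd-tc}, fix a twin-cover $C$ with $\Card{C}\le \tc(G)$, so that every component of $G\setminus C$ is a clique all of whose vertices share the same neighborhood in $C$. I would assign each cover vertex its own private color and color every vertex of a clique component by the subset $S\subseteq C$ that is its common neighborhood, giving at most $\tc(G)+2^{\tc(G)}$ colors. For the tree I place all cover vertices as leaves under one level-$1$ node and each clique component under its own level-$1$ node, so that siblings are at distance $2$ and non-siblings at distance $4$. The core of the argument is to verify that edge-existence is a function of (color, color, distance): same-component pairs are distance-$2$, equal-color, and always edges (cliques); distinct-component pairs are distance-$4$ and never edges; a cover vertex and a component vertex are always at distance $4$ and adjacent exactly when the cover vertex lies in the component's set $S$, which depends only on the two colors; and cover-cover pairs are distance-$2$ with distinct colors, so $E(G[C])$ can be read off the color pair. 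Checking that no two pairs with identical (color, color, distance) receive conflicting edge-status is then routine but must be done case by case.

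For the last statement I would reuse the family $\mathcal{S}$ of once-subdivided stars from the proof of Theorem~\ref{the:rel-mw-sd}, where it is already shown that each $S_n$ has a height-$2$ tree-model with $3$ colors, so $\mathcal{S}$ has shrub-depth at most $2$. It remains to observe that $\mathcal{S}$ has both unbounded neighborhood diversity and unbounded twin-cover number: the $n$ subdivision vertices of $S_n$ have pairwise distinct neighborhoods, forcing $\nd(S_n)\ge n$, and since $S_n$ is triangle-free every twin-cover must contain an endpoint of each of the $n$ pendant edges, forcing $\tc(S_n)=\Omega(n)$. Together with the shrub-depth bound this yields the desired separation.
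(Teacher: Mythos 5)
Your proposal is correct and follows essentially the same route as the paper: a one-level tree-model colored by neighborhood-diversity classes, a two-level tree-model for twin-cover with private colors for cover vertices and neighborhood-subset colors for the clique components (the paper attains the same $2^{\tc(G)}+\tc(G)$ bound by grouping components with equal neighborhoods), and the family $\mathcal{S}$ of subdivided stars for the separation. The only cosmetic difference is that you verify the unboundedness of $\nd$ and $\tc$ on $\mathcal{S}$ directly, whereas the paper derives it from the inequalities of Theorem~\ref{the:rel-mw-nd-tc} together with the unbounded modular-width of $\mathcal{S}$; both arguments are valid.
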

\begin{proof}
  Suppose that $\mathcal{G}$ has bounded neighborhood diversity, i.e.,
  there is some natural number $c$ such that $\nd(G) \leq c$ for every
  $G \in \mathcal{G}$. We show that every graph $G \in
  \mathcal{G}$ has a tree-model of height at most $1$ that uses at
  most $\nd(G)\leq c$ colors. Using the definition of neighborhood diversity
  from~\cite{Lampis12} it follows that $G$ has a partition
  $\{V_1,\dots,V_{\nd(G)}\}$ of its vertex set such that for every $1
  \leq i \leq \nd(G)$ it holds that the graph $G[V_i]$ is
  either a clique or an independent set and for every $1 \leq i < j \leq
  \nd(G)$, either all vertices in $V_i$ are adjacent to all vertices
  of $V_j$ or $G$ contains no edges between vertices in $V_i$ and
  vertices in $V_j$. Then the tree-model for $G$ consists of a root
  $r$ and $1$ leave for every vertex $v$ in $G$ with color $i$ if  $v
  \in V_i$.

  Now, suppose that $\mathcal{G}$ has bounded twin-cover, i.e.,
  there is some natural number $c$ such that $\tc(G) \leq c$ for every
  $G \in \mathcal{G}$. We show that every graph $G \in
  \mathcal{G}$ has a tree-model of height at most $2$ that uses at
  most $2^{\tc(G)}+tc(G)\leq 2^c+c$ colors. Using the definition of twin-cover
  from~\cite{Ganian11} it follows that there is a set
  $W=\{w_1,\dots,w_{\tc(G)}\}$ 
  of $\tc(G)$ vertices of $G$ such that every component $C'$ of $G
  \setminus W$ is a clique and every vertex in $C'$ is connected to
  the same vertices in $W$. Let
  $C_1^1,\dots,C_{p_1}^1,\dotsc,C_1^l,\dots,C_{p_l}^l$ be all the components of
  $G \setminus W$ such that $2$ components $C_{i_1}^{j_1}$ and
  $C_{i_2}^{j_2}$ have the same neighborhood in $W$ if and only if
  $j_1=j_2$.
  Because there are at most $2^{|W|}$ 
  possible such neighborhoods, we obtain that $l \leq 2^{|W|}$.
  We construct a tree-model of $G$ as follows. We start with the root
  node $r$, which has $1$ child, say $C_i^j$, for every component of
  $G \setminus W$, and $1$ child (which is also a leaf of the
  tree-model), say $w_i$, with color $l+i$ for every $1 \leq i \leq
  |W|$. Finally, every node $C_i^j$
  has $|V(C_i^j)|$ children (which are also leaves of the tree-model)
  with color $j$. This finishes the construction of the tree-model for
  $G$.

  To see that there are classes of
  graphs that have unbounded neighborhood diversity and twin-cover
  number but shrub-depth $2$, consider the class $\mathcal{S}$
  from the proof of Theorem~\ref{the:rel-mw-sd}. As shown in
  Theorem~\ref{the:rel-mw-sd} $\mathcal{S}$ has unbounded
  modular-width but shrub-depth $2$. It now follows from
  Theorem~\ref{the:rel-mw-nd-tc} that $\mathcal{S}$ has also unbounded
  neighborhood diversity and unbounded twin-cover number.
\shortversion{\qed}\end{proof}

\section{Algorithms on Modular-width}

In this section we show that \textsc{Partitioning Into Paths},
\textsc{Hamiltonian Path}, \textsc{Hamiltonian Cycle}, and
\textsc{Coloring} are fixed-parameter tractable parameterized by the
modular-width of the input graph. Our algorithms use a bottom-up dynamic
programming approach along the parse-tree of an algebraic expression
as defined in Section~\ref{subsec:modular-width}. That is for every
node of such a parse-tree we compute a solution (or a record
representing a solution) given solutions (or records) for the
children of the node in the parse-tree. 
The running time of our algorithms is then the number of nodes in the
parse-tree times the maximum time spend at any node of the parse-tree.
Because the number of nodes of a parse-tree is linear
in the number of vertices of the created graph, it suffices to bound the maximum time
spend at any node of the parse-tree. Furthermore, because the
operations (O1)--(O3) can be replaced by one operation of type (O4) that
uses at most $2$ operands, we only need to bound the time spend to
compute a record for the graph obtained by operation (O4).
To avoid cumbersome run-time bounds we use the notation $O^+$ to
suppress poly-logarithmic factors, i.e., we write $O^+(f)$ when we have 
$O(f\log^df)$ for some constant $d$.

\subsection{Coloring}

This section is devoted to a proof of the following
theorem. Recall the definition of \textsc{Graph Coloring} and related
notions from Section~\ref{ssec:cons-prob}.
\begin{THE}
  \textsc{Graph Coloring} parameterized by the modular-width of the
  input graph is fixed-parameter tractable.
\end{THE}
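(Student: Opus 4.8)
The plan is to run the bottom-up dynamic program advertised in the section preamble, where the record stored at each node of the parse tree is simply the chromatic number of the graph produced at that node. Since operations (O1)--(O3) are special cases of (O4) with two operands, it suffices to explain how to compute $\chi(G)$ for $G = H(G_1,\dots,G_n)$ with $n \le \mw(G)$, given the values $\chi(G_1),\dots,\chi(G_n)$ already computed for the children. The leaves are isolated vertices with $\chi = 1$.

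The first and key step is a structural lemma asserting that $\chi(H(G_1,\dots,G_n))$ depends on the children only through the numbers $\chi(G_i)$. Concretely, I claim it equals the least $k$ for which one can assign to each vertex $v_i$ of $H$ a set $S_i \subseteq \{1,\dots,k\}$ of exactly $\chi(G_i)$ colors with $S_i \cap S_j = \emptyset$ whenever $\{v_i,v_j\} \in E(H)$. For the easy direction, from such an assignment I color each $G_i$ using the $\chi(G_i)$ colors in $S_i$; edges inside a module are handled by properness of that coloring, and edges between modules $i,j$ are handled because $S_i$ and $S_j$ are disjoint exactly when $\{v_i,v_j\}\in E(H)$, which is precisely when (O4) inserts those edges. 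Conversely, given an optimal coloring of $G$, I let $S_i$ be the set of colors appearing in $G_i$; then $|S_i|\ge \chi(G_i)$, adjacent modules use disjoint color sets, and I may discard surplus colors from each $S_i$ down to size exactly $\chi(G_i)$ without enlarging the total palette. This is what justifies keeping only a single integer per node.

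The second step turns this multicoloring quantity into an integer program of bounded dimension. The observation is that the set of vertices sharing a fixed color is independent in $H$, so the assignment above is equivalent to a family of independent sets of $H$ covering each $v_i$ at least $\chi(G_i)$ times. I therefore introduce one nonnegative integer variable $x_I$ for every independent set $I$ of $H$ (there are at most $2^n \le 2^{\mw(G)}$ of them), impose $\sum_{I \ni v_i} x_I \ge \chi(G_i)$ for every $i$, and minimize $\sum_I x_I$; the optimum is exactly $\chi(G)$. Since $\chi(G) \le |V(G)|$, I solve this optimization by binary search over the target value, invoking the feasibility routine of Proposition~\ref{pro:ILPF} $O(\log |V(G)|)$ times, each time adding the constraint $\sum_I x_I \le t$. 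The number of variables is bounded by $2^{\mw(G)}$, a function of the parameter alone, and the number of constraints is $n \le \mw(G)$, so each call runs in fixed-parameter time.

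Multiplying the linearly many parse-tree nodes by this per-node cost yields the claimed algorithm. I expect the main obstacle to be the structural lemma of the second paragraph -- specifically, confirming that nothing beyond $\chi(G_i)$ needs to be propagated and verifying both directions of the reduction to multicoloring of the quotient graph $H$. Once that is in place, encoding multicoloring as a covering integer program with at most $2^{\mw(G)}$ independent-set variables is routine, and Proposition~\ref{pro:ILPF} finishes the argument.
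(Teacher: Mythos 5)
Your proof is correct and shares the paper's overall skeleton: a bottom-up DP over the parse tree that stores only the single integer $\chi$ per node, reduction of everything to the combine step at an (O4) node, and the observation that each module $G_i$ may be replaced by a clique on $\chi(G_i)$ vertices (your structural lemma subsumes the paper's Lemma stating $\chi(G(K_{\chi(G_1)},\dotsc,K_{\chi(G_n)}))=\chi(G(G_1,\dotsc,G_n))$, with essentially the same two-direction argument). Where you genuinely diverge is in how the resulting quotient problem is expressed and solved. You characterize the answer as the multicoloring number of the quotient with demands $\chi(G_i)$ --- equivalently a minimum-size multiset of independent sets of the quotient covering each $v_i$ at least $\chi(G_i)$ times --- and solve the resulting covering ILP with at most $2^{\mw(G)}$ variables via Proposition~\ref{pro:ILPF} and binary search; the paper instead imports a formula of Lampis expressing the quantity as $\min_{\lambda}\sum_{c\in\lambda(G)}\max\SB \chi(G_i) \SM v_i\in\lambda^{-1}(c)\SE$ over proper colorings $\lambda$ of the quotient, and evaluates it via \textsc{Max Weighted Partition} in time $O^{+}(2^{n}n^{2}\max_i\chi(G_i))$. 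The trade-off is twofold. On running time the paper wins: its combine step is single-exponential in the modular-width, whereas your ILP has $2^{\mw(G)}$ variables and therefore costs roughly $2^{O(\mw(G)\cdot 2^{\mw(G)})}$ per node --- still FPT, but doubly exponential. On correctness, however, your characterization is the one that is provably tight, and the two quantities do not coincide in general: for the quotient $C_5$ with all demands equal to $2$, the multicoloring number is $5$, which indeed equals $\chi(C_5(K_2,\dotsc,K_2))$ (the graph has $10$ vertices and independence number $2$, and an explicit $5$-coloring exists), whereas the minimum over proper colorings of the sum of per-class maxima evaluates to $6$. So your route sidesteps a subtlety on which the paper's combine lemma silently relies, at the price of a worse (but still parameter-only) dependence on $\mw(G)$.
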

As outlined above we only need to bound the time spend to compute a
record for a node of type (O4) of the parse-tree. In the case
of \textsc{Graph Coloring} a record is simply the chromatic
number of the graph. Hence, we will have shown the theorem after
showing the following lemma.
\begin{LEM}\label{lem:coloring-of-product}
  Let $G$ be a graph with vertices $v_1,\dots,v_n$, $G_1,\dotsc,G_n$
  be graphs, and $H:=G(G_1,\dotsc,G_n)$. 
  Then $\chi(H)$ can be computed from $\chi(G_1),\dotsc,\chi(G_n)$ in
  time % $O(n^{n+1}(\log(\max_{1 \leq i \leq n}\chi(G_i))))$.
  $O^+(2^nn^2\max_{1 \leq i \leq n}\chi(G_i))$.
\end{LEM}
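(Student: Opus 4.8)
The key structural fact about $H = G(G_1,\dots,G_n)$ is that any proper coloring must use disjoint color sets on any two modules $G_i, G_j$ whenever $\{v_i,v_j\} \in E(G)$, since the substitution operation makes every vertex of $G_i$ adjacent to every vertex of $G_j$; conversely, within a module we may reuse colors freely as long as each $G_i$ is properly colored. Hence a proper coloring of $H$ is determined (up to which specific palette of colors is assigned) by choosing, for each $i$, a number of colors $k_i \geq \chi(G_i)$ to spend on $G_i$, and then realizing these palettes so that modules adjacent in $G$ receive disjoint palettes. The total number of distinct colors needed is then governed by the ``weighted clique'' structure of $G$: if we assign weight $k_i$ to vertex $v_i$, the minimum number of colors is the maximum over all cliques $Q$ of $G$ of $\sum_{v_i \in Q} k_i$ (and this bound is achievable by a greedy/interval argument on $G$). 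So my plan is to reduce computing $\chi(H)$ to the optimization problem of choosing $k_i \geq \chi(G_i)$ to minimize $\max_{Q \text{ clique of } G} \sum_{i \in Q} k_i$.

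\textbf{First}, I would prove the structural characterization above rigorously: that $\chi(H) = \min \max_{Q} \sum_{i \in Q} k_i$ where the minimum is over integer vectors $(k_1,\dots,k_n)$ with $k_i \geq \chi(G_i)$. The ``$\leq$'' direction is a construction (given the $k_i$, produce a coloring whose size equals the max-weight clique), and the ``$\geq$'' direction observes that any coloring induces such a vector via $k_i = |\lambda(G_i)|$ and that a clique in $G$ forces disjoint palettes, so at least $\sum_{i \in Q} k_i$ colors are used. \textbf{Second}, since there is no benefit to setting $k_i > \chi(G_i)$ — increasing any $k_i$ can only increase clique weights — the optimum is attained at $k_i = \chi(G_i)$ for all $i$, so in fact
\begin{equation}
  \chi(H) \;=\; \max_{Q \text{ clique of } G} \ \sum_{v_i \in Q} \chi(G_i).
\end{equation}
\textbf{Third}, I would bound the running time of evaluating this formula. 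We are given the values $\chi(G_i)$, each of size $O(\log n + \log \max_i \chi(G_i))$ bits. Enumerating all $2^n$ subsets $S \subseteq \{v_1,\dots,v_n\}$, checking whether each is a clique in $G$ (in time $O(n^2)$), and summing the relevant $\chi(G_i)$ (in time $O^+(n \max_i \chi(G_i))$ accounting for arithmetic on numbers of this magnitude) yields the claimed $O^+(2^n n^2 \max_{1 \leq i \leq n} \chi(G_i))$ bound.

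The main point requiring care is the structural characterization, specifically the achievability (``$\leq$'') direction: I must argue that the lower bound set by the heaviest clique is simultaneously achievable across all modules. This follows because assigning to each module a palette of size $k_i$ is exactly a weighted coloring of the small graph $G$, and the chromatic number of a weighted graph equals its maximum weighted clique precisely when $G$ is perfect; however, to avoid assuming $G$ is perfect I would instead argue directly that $\chi(H)$ equals the chromatic number of the graph obtained from $G$ by blowing up each $v_i$ into a clique of size $\chi(G_i)$ (a \emph{lexicographic/blow-up product with cliques}), and then bound the palette assignment constructively via a first-fit coloring on this blow-up. The subtlety is that for non-perfect $G$ the maximum weighted clique may undercount, so the honest formula is $\chi(H) = \chi\bigl(G[K_{\chi(G_1)},\dots,K_{\chi(G_n)}]\bigr)$, the chromatic number of the clique-blow-up of $G$; computing this value — rather than merely the max-weight clique — is what the $2^n$ enumeration must actually handle, and I expect reconciling the clean max-clique formula with the general blow-up chromatic number to be the crux of the argument.
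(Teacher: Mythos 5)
Your proposal contains a genuine gap, and it is precisely at the point you yourself flag as the crux. The identity in your equation, $\chi(H)=\max_{Q}\sum_{v_i\in Q}\chi(G_i)$, is false for general $G$: take $G=C_5$ and every $G_i=K_1$, so that $H=C_5$; then every clique of $G$ has weight at most $2$, but $\chi(H)=3$. The same example defeats the ``$\leq$'' direction of your First step, since the lower bound set by the heaviest clique is \emph{not} simultaneously achievable unless $G$ is perfect (your own caveat). You correctly retreat to the honest statement $\chi(H)=\chi\bigl(G(K_{\chi(G_1)},\dotsc,K_{\chi(G_n)})\bigr)$ --- this is exactly the paper's Lemma~\ref{lem:coloring-product}, and your proof sketch of it (palette reuse within modules, disjoint palettes across $G$-edges) matches the paper's. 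But your Third step then computes the wrong quantity: enumerating the $2^n$ subsets, testing cliqueness, and taking the maximum weighted clique evaluates the false formula, not the blow-up chromatic number, and you give no algorithm for the latter within the claimed time bound.

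The missing idea is how to actually compute $\chi\bigl(G(K_{s_1},\dotsc,K_{s_n})\bigr)$ in time $O^+(2^n n^2 \max_i s_i)$. The paper uses the identity (Proposition~\ref{pro:coloring-neighborhood-diversity}, from~\cite{Lampis12})
\[
\chi\bigl(G(K_{s_1},\dotsc,K_{s_n})\bigr)=\min_{\lambda\in\Lambda(G)}\ \sum_{c\in\lambda(G)}\ \max\SB s_i \SM v_i\in\lambda^{-1}(c)\SE ,
\]
a \emph{minimization over partitions of $V(G)$ into independent sets} of a sum-of-maxima objective, and then solves this as an instance of \textsc{Max Weighted Partition} using the $O^+(2^n k^2 M)$ inclusion--exclusion algorithm of Bj\"orklund, Husfeldt and Koivisto~\cite{BjorklundHK09}. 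The $2^n$ in the running time comes from subset convolution over all partitions, not from enumerating cliques; a naive search over partitions into independent sets would exceed $2^n$, so this algorithmic ingredient cannot be waved away. Without it (or an equivalent), your argument does not establish the stated running time, and with the max-clique shortcut it computes an incorrect value.
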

We will prove the lemma by reducing the coloring problem to the
following problem.

\defproblem{\textsc{Max Weighted Partition}}{An $n$-element set $N$
  and functions $f_1,\dotsc,f_k$ from the subsets of $N$ to integers
  from the range $[-M,M]$.}{A $k$-partition $(S_1,\dotsc,S_k)$ of $N$
  that maximizes $f_1(S_1)+\dotsb+f_k(S_k)$.}
\begin{PRO}\label{pro:max-weighted-partition}(\emph{\cite[Theorem 4.]{BjorklundHK09}})
  \textsc{Max Weighted Partition} can be solved in time $O^+(2^nk^2M)$.
\end{PRO}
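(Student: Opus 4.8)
The plan is to solve \textsc{Max Weighted Partition} by reducing it to $k-1$ iterated \emph{max-sum subset convolutions}, and then to realise each such convolution as an ordinary ring operation through a polynomial embedding, so that fast subset convolution applies. First I would set up the recursion. For functions $a,b:2^N\to\mathbb{Z}$, define their max-sum subset convolution by $(a\oplus b)(S)=\max_{T\subseteq S}\bigl(a(T)+b(S\setminus T)\bigr)$. Put $g_1=f_1$ and $g_j=g_{j-1}\oplus f_j$ for $2\le j\le k$. A straightforward induction shows that $g_j(S)$ equals the maximum of $f_1(S_1)+\dots+f_j(S_j)$ over all $j$-partitions $(S_1,\dots,S_j)$ of $S$ (empty parts allowed, since $f_i$ is defined on every subset), so $g_k(N)$ is the optimum value. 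To output an optimal partition I would keep the tables $g_1,\dots,g_k$ and backtrack: given $g_k(N)$, find $T$ with $g_{k-1}(T)+f_k(N\setminus T)=g_k(N)$, set $S_k=N\setminus T$, and recurse on $g_{k-1}(T)$; each level costs $O^+(2^n)$ and does not affect the overall bound.

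The heart of the argument is computing a single $a\oplus b$ without leaving ring arithmetic. Here I would embed the max-plus semiring into a polynomial ring: associate with an integer value $v$ the monomial $x^{v}$, shifting by a fixed offset so that all exponents are nonnegative. Replacing $a(T)$ by $\hat a(T)=x^{a(T)}$ and $b$ analogously, the \emph{ordinary} subset convolution over $\mathbb{Z}[x]$ produces $(\hat a*\hat b)(S)=\sum_{T\subseteq S}x^{a(T)+b(S\setminus T)}$. Because every term contributes a nonnegative coefficient, no cancellation can occur, so the largest exponent appearing with nonzero coefficient in $(\hat a*\hat b)(S)$ is exactly $\max_T\bigl(a(T)+b(S\setminus T)\bigr)=(a\oplus b)(S)$. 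Thus one max-sum convolution reduces to one ordinary subset convolution over $\mathbb{Z}[x]$ followed by reading off top degrees and re-encoding each resulting value as a single monomial for the next round. I can then invoke the fast subset convolution algorithm of Bj\"orklund, Husfeldt, Kaski and Koivisto, which computes $\hat a*\hat b$ on all of $2^N$ using $O(2^n n^2)$ operations of the underlying ring via ranked zeta and M\"obius transforms.

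Finally I would account for the cost. At step $j$ the values of $g_{j-1}$ lie in $[-(j-1)M,(j-1)M]$ and those of $f_j$ in $[-M,M]$, so after the shift the monomials involved have degree $O(jM)$; each ring operation (a polynomial addition, or an FFT-based polynomial multiplication) then costs $O^+(jM)$, and the re-encoding keeps subsequent degrees bounded by $O(jM)$. One convolution at step $j$ therefore costs $O(2^n n^2)\cdot O^+(jM)=O^+(2^n jM)$, where the factor $n^2$ is poly-logarithmic in $2^n$ and hence absorbed by $O^+$. Summing over $j=2,\dots,k$ yields $\sum_{j=2}^{k}O^+(2^n jM)=O^+(2^n k^2 M)$, which is the claimed bound.

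The main obstacle is precisely the middle step: max-sum subset convolution is not a ring operation, so one cannot directly apply fast subset convolution. The decisive idea is the cancellation-free polynomial embedding of max-plus into $\mathbb{Z}[x]$, together with the observation that controlling the polynomial degree by $O(jM)$ at step $j$ is exactly what bounds the per-operation cost and produces the $k^2M$ dependence. A minor point to verify is that intermediate coefficients in the ranked transforms stay bounded by $2^{O(n)}$, so arithmetic on them involves $O(n)$-bit numbers, again poly-logarithmic in $2^n$ and absorbed into $O^+$.
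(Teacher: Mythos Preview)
The paper does not give its own proof of this proposition; it is simply quoted from \cite[Theorem~4]{BjorklundHK09} as a black box. Your write-up is therefore not to be compared against anything in the present paper, but it does faithfully reconstruct the argument of Bj\"orklund, Husfeldt and Koivisto: iterated max-sum subset convolutions, the cancellation-free embedding of $(\max,+)$ into $\mathbb{Z}[x]$ via $v\mapsto x^{v}$, and fast subset convolution over that ring, with the degree bound $O(jM)$ at step $j$ yielding the $k^{2}M$ factor after summation. The reasoning and the cost accounting are correct.
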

To simplify the reduction to \textsc{Max Weighted Partition}
we need the following Proposition and Lemma. 
\begin{PRO}[\cite{Lampis12}]\label{pro:coloring-neighborhood-diversity}
  Let $G$ be a graph with vertices $v_1,\dotsc,v_n$ and
  $s_1,\dotsc,s_n$ be natural numbers.
  Then $\chi(G(K_{s_1},\dotsc,K_{s_n}))=\min_{\lambda \in
    \Lambda(G)}(\sum_{c \in \lambda(G)}\max \SB s_i \SM v_i \in \lambda^{-1}(c)\SE)$.
\end{PRO}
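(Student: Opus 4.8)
The plan is to pass to the equivalent ``multicoloring'' view of $H:=G(K_{s_1},\dots,K_{s_n})$. Since each $K_{s_i}$ is a clique it must receive $s_i$ pairwise distinct colors, and whenever $v_iv_j\in E(G)$ the two cliques are completely joined, so their color sets must be disjoint; conversely, when $v_iv_j\notin E(G)$ there are no edges between the two cliques and their color sets may overlap freely. Hence a proper coloring of $H$ with color set $\{1,\dots,k\}$ is exactly an assignment $i\mapsto C_i\subseteq\{1,\dots,k\}$ with $|C_i|=s_i$ and $C_i\cap C_j=\emptyset$ whenever $v_iv_j\in E(G)$, and $\chi(H)$ is the least $k$ admitting such an assignment. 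I would then prove the stated identity by establishing the two inequalities between $\chi(H)$ and the right-hand side, which I abbreviate by $R$.

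For $\chi(H)\le R$ I would use a direct block construction. Fix a coloring $\lambda\in\Lambda(G)$ attaining $R$. For each color $c\in\lambda(G)$ reserve a fresh block $B_c$ of $\max\{s_i : v_i\in\lambda^{-1}(c)\}$ colors, the blocks being pairwise disjoint so that $\sum_{c\in\lambda(G)}|B_c|=R$, and color each clique $K_{s_i}$ with $v_i\in\lambda^{-1}(c)$ by an arbitrary $s_i$-subset of $B_c$. This is a valid coloring of $H$: within a clique all colors differ; two cliques whose vertices are $G$-adjacent lie in different color classes of $\lambda$ and hence use disjoint blocks; and two cliques in the same class are $G$-nonadjacent, so overlapping their color sets is harmless. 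Thus $H$ is colored with $R$ colors.

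For the reverse inequality $\chi(H)\ge R$ I would start from an optimal coloring of $H$ with $k=\chi(H)$ colors and read off the sets $C_i$. Two structural facts drive the argument: for each color $t$ the set $T_t:=\{v_i : t\in C_i\}$ is independent in $G$ (two $G$-adjacent cliques never share a color), and each $v_i$ lies in exactly $s_i$ of the $T_t$. The target is a single $\lambda\in\Lambda(G)$ whose blocks fit inside the $k$ available colors, i.e.\ with $\sum_{c\in\lambda(G)}\max\{s_i:v_i\in\lambda^{-1}(c)\}\le k$. The natural attempt is to reorganize the family $\{C_i\}$ into ``block form'': partition the colors into disjoint blocks and force the color sets of vertices sharing a block to be nested, so that each block is governed by its heaviest vertex and the vertices assigned to a common block constitute one color class of the desired $\lambda$.

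The hard part is exactly this last step, and I expect it to be the main obstacle. Reading off the naive colorings $v_i\mapsto\min C_i$ or $v_i\mapsto\max C_i$ does yield \emph{proper} colorings of $G$, but their block sums can strictly exceed $k$, so a genuine rearrangement of the $C_i$ is unavoidable. The difficulty is that a local ``uncrossing'' that nests two nonadjacent sets $C_i,C_j$ can destroy disjointness with the \emph{other} neighbors of $v_i$ and $v_j$, so one cannot repair the coloring one pair at a time. I would therefore argue globally, either by selecting an extremal optimal coloring of $H$ (for instance one that lexicographically pushes the heaviest cliques onto the smallest colors) and showing it is already in block form, or by an induction that peels off one class $T_t$ (a maximal independent set), decrements the demands $s_i$ on $T_t$, solves the residual instance, and recombines; the delicate point in the induction is to reinsert the peeled layer at an additive cost of only a single color rather than one color per affected class.
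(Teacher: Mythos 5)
The paper does not actually prove this statement; it imports it from \cite{Lampis12}, so there is no in-paper argument to measure your attempt against. Your first direction ($\chi(H)\le R$, via pairwise disjoint blocks of size $\max\{s_i\}$, one per color class of $\lambda$) is correct. The genuine problem is the converse direction, which you rightly single out as the obstacle and leave open: that gap cannot be closed, because the inequality $\chi(H)\ge R$ is false in general. Take $G=C_5$ with $s_1=\dots=s_5=2$, so $H=C_5(K_2,\dots,K_2)$. Assigning the color pairs $\{1,2\},\{3,4\},\{5,1\},\{2,3\},\{4,5\}$ to the five cliques along the cycle gives consecutive cliques disjoint pairs, so $\chi(H)\le 5$; since an independent set of $H$ meets at most one vertex of each clique and the cliques it meets must be pairwise nonadjacent in $C_5$, we get $\alpha(H)=2$ and $\chi(H)\ge 10/2=5$, hence $\chi(H)=5$. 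On the other hand every proper coloring $\lambda$ of $C_5$ has at least three color classes, each contributing $\max\{s_i : v_i\in\lambda^{-1}(c)\}=2$, so the right-hand side equals $6$.

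In other words, your translation of the problem into set assignments $C_i$ with $|C_i|=s_i$ and $C_i\cap C_j=\emptyset$ for $v_iv_j\in E(G)$ is exactly right, and it shows that $\chi(H)$ is the weighted (multi)chromatic number of $G$ with demands $s_i$: the minimum number of independent sets of $G$, counted with multiplicity, that cover each $v_i$ at least $s_i$ times. The displayed formula instead minimizes over \emph{partitions} of $V(G)$ into independent sets, forcing all vertices in a common class to draw their colors from one shared block; this yields only an upper bound, and the $C_5$ example shows it can be strict, because the optimal multicoloring of $C_5$ uses five distinct independent sets that do not arise from any single partition. Consequently neither of your two proposed completions (an extremal optimal coloring already in block form, or peeling off a maximal independent set at the cost of one extra color) can succeed; a correct statement must use the covering formulation (e.g.\ an integer program over the independent sets of $G$) rather than the partition formula as printed, and the discrepancy propagates to the way the proposition is used later in the paper.
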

\shortversion{
  \begin{LEM}[$\star$]\label{lem:coloring-product}
    Let $G$ be a graph with vertices $v_1,\dots,v_n$, $G_1,\dotsc,G_n$
    be graphs, $H_K:=G(K_{\chi(G_1)},\dotsc,K_{\chi(G_n)})$, and $H:=G(G_1,\dotsc,G_n)$. 
    Then $\chi(H_K)=\chi(H)$.
  \end{LEM}
}
\longversion{
  \begin{LEM}\label{lem:coloring-product}
    Let $G$ be a graph with vertices $v_1,\dots,v_n$, $G_1,\dotsc,G_n$
    be graphs, $H_K:=G(K_{\chi(G_1)},\dotsc,K_{\chi(G_n)})$, and $H:=G(G_1,\dotsc,G_n)$. 
    Then $\chi(H_K)=\chi(H)$.
  \end{LEM}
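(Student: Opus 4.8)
The plan is to establish the equality by proving the two inequalities $\chi(H) \le \chi(H_K)$ and $\chi(H_K) \le \chi(H)$ separately. The single structural fact I will rely on throughout is the all-or-nothing nature of the substitution operation (O4): for $i \ne j$, every vertex of module $G_i$ is adjacent to every vertex of module $G_j$ when $\{v_i,v_j\} \in E(G)$, and otherwise there are no edges between them; the same holds verbatim for the cliques in $H_K$. Consequently, in any proper coloring of either graph, whenever $\{v_i,v_j\} \in E(G)$ the set of colors used on module $i$ must be disjoint from the set used on module $j$, whereas for non-adjacent $v_i,v_j$ no constraint links the two color sets. Thus the only thing that matters for feasibility across modules is the collection of per-module color sets together with these disjointness requirements.

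For $\chi(H) \le \chi(H_K)$, I would start from an optimal proper coloring of $H_K$ and read off, for each $i$, the set $C_i$ of colors appearing on the clique $K_{\chi(G_i)}$. Since a clique is colored injectively, $|C_i| = \chi(G_i)$, and by the disjointness observation $C_i \cap C_j = \emptyset$ whenever $\{v_i,v_j\}\in E(G)$. I then recolor $H$ by giving module $G_i$ a proper coloring drawn from the palette $C_i$; this is possible precisely because $|C_i| = \chi(G_i)$ colors suffice to color $G_i$. The result is proper inside each module by construction, and proper across every join because adjacent modules draw from disjoint palettes. It uses no color outside $\bigcup_i C_i$, so $\chi(H) \le \chi(H_K)$.

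For the reverse inequality $\chi(H_K) \le \chi(H)$, I would symmetrically start from an optimal proper coloring of $H$ and let $D_i$ be the set of colors it uses on module $G_i$. Since the restriction to $G_i$ is a proper coloring, $|D_i| \ge \chi(G_i)$, and again $D_i \cap D_j = \emptyset$ for $\{v_i,v_j\}\in E(G)$. I then choose an arbitrary $\chi(G_i)$-element subset $C_i \subseteq D_i$ and color the clique $K_{\chi(G_i)}$ injectively using $C_i$; this is legal because $|C_i| = \chi(G_i)$ matches the clique size, and the cross-module edges are satisfied because $C_i \subseteq D_i$ and $C_j \subseteq D_j$ remain disjoint. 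No new colors are introduced, giving $\chi(H_K) \le \chi(H)$ and hence equality.

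The technical core, and the only place needing care, is the passage between a proper coloring of a module and the set of colors it occupies: in both directions I reduce each module to a budget of exactly $\chi(G_i)$ colors, relying on the two elementary facts that $\chi(G_i)$ colors both suffice (first direction) and are necessary (second direction) for a proper coloring of $G_i$. I expect the main subtlety to be making the disjointness bookkeeping fully rigorous — in particular, checking that restricting a module to a sub-palette, or re-expanding a clique into the full graph $G_i$, never violates a join constraint — which follows cleanly once the all-or-nothing structure of (O4) is isolated as above. Everything else is routine.
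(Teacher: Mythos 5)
Your proposal is correct and follows essentially the same route as the paper's proof: both directions transfer colorings between $H$ and $H_K$ module by module, using that $\chi(G_i)$ colors are necessary and sufficient for each $G_i$ and that the all-or-nothing join structure forces disjoint palettes on adjacent modules. Your write-up simply makes explicit the disjointness bookkeeping that the paper leaves implicit.
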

  \begin{proof}
    We will show that for every coloring $\lambda$ of $H$ there is a
    coloring $\lambda_K$ of $H'$ that uses no more colors, i.e.,
    $|\lambda_K(H_K)|\leq|\lambda(H)|$, and vice versa. 
    Let $\lambda$ be a coloring for $H$. Then for every $1 \leq i \leq
    n$ the number of colors used to color the copy of
    $G_i$ in $H$ is at least $\chi(G_i)$. Hence, we can use a subset of
    these colors to color the copy of $K_{\chi(G_i)}$ in $H_K$.
    
    For the reverse direction, let $\lambda_K$ be a coloring for
    $H_K$. Then for every $1 \leq i \leq
    n$ the number of colors used to color the copy of
    $K_{\chi(G_i)}$ in $H_K$ is $\chi(G_i)$. Hence, we can use the same
    colors to color the copy of $G_i$ in $H$.
    \shortversion{\qed}\end{proof}
}
We can now proceed with a proof of Lemma~\ref{lem:coloring-of-product}.
\begin{proof}[of Lemma~\ref{lem:coloring-of-product}]\sloppypar
  We reduce the coloring problem to the \textsc{Max Weighted
    Partition} problem as follows: We set $N:=V(G)$ and
  $f_1(S)=\dotsb=f_k(S)=-\max \SB \chi(G_i) \SM v_i \in S \SE$ for
  every subset $S$ of $N$. It
  follows from Proposition~\ref{pro:coloring-neighborhood-diversity} and
  Lemma~\ref{lem:coloring-product} that the maximum weight of a
  partition of this instance corresponds to the chromatic number
  $\chi(H)$. Hence, the lemma follows from
  Proposition~\ref{pro:max-weighted-partition}.

  % It follows from
  % Proposition~\ref{pro:coloring-neighborhood-diversity} and
  % Lemma~\ref{lem:coloring-product} that $\chi(H)$ can
  % be computed from the numbers $\chi(G_1),\dotsc,\chi(G_m)$ by
  % iterating over all possible colorings of the graph $G$ and for each
  % of these colorings $\lambda$ computing the number
  % $\sum_{c \in \lambda(G)}\max \SB \chi(G_i) \SM v_i \in \lambda^{-1}(c)\SE$.
  % Because there are at most $n^n$ possible colorings for $G$ and for
  % each of these colorings $\lambda$ we
  % can compute $\sum_{c \in \lambda(G)}\max \SB \chi(G_i) \SM v_i \in \lambda^{-1}(c)\SE$
  % in time $O(n(\log(\max_{1 \leq i \leq m}\chi(G_i))))$ the total time to compute
  % $\chi(H)$ from $\chi(G_1),\dots,\chi(G_n)$ is
  % $O(n^{n+1}(\log(\max_{1 \leq i \leq n}\chi(G_i))))$, as required.
\shortversion{\qed}\end{proof}

\subsection{Partitioning into Paths}

This section is devoted to a proof of the following theorem. Recall
the definition of \textsc{Partitioning Into Paths} and related
notions from Section~\ref{ssec:cons-prob}.
\begin{THE}
  \textsc{Partitioning Into Paths} (and hence also \textsc{Hamiltonian
    Path} and \textsc{Hamiltonian Cycle}) parameterized by the modular
  width of the input graph is fixed-parameter tractable.
\end{THE}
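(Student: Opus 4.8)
The plan is to run a bottom-up dynamic program along the parse-tree of an optimal algebraic expression for the input graph (Section~\ref{subsec:modular-width}), exactly as in the \textsc{Coloring} algorithm. The record I store at each node is a single integer together with the vertex count: namely $\ham$ of the graph produced at that node and its number of vertices. Since operations (O1)--(O3) are special cases of (O4) with at most two operands and the parse-tree has linearly many nodes, it suffices to compute, for $H:=G(G_1,\dots,G_n)$ with $n\le\mw(G)$, the value $\ham(H)$ from $\ham(G_1),\dots,\ham(G_n)$ and $|V(G_1)|,\dots,|V(G_n)|$ in time depending only on $n$ times a polynomial. First I would prove an \emph{interpolation} lemma: each $G_i$ admits a partition into exactly $s$ vertex-disjoint paths for every integer $s$ with $\ham(G_i)\le s\le|V(G_i)|$. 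The lower end is the definition of $\ham(G_i)$, the upper end is the all-singletons partition, and every intermediate value is reached by repeatedly breaking one path-edge, which raises the path count by exactly one. Hence the only information about $G_i$ relevant to the substitution is the feasible interval $[\ham(G_i),|V(G_i)|]$ of \emph{segment counts}.

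Next I would recast $\ham(H)$ as a combinatorial optimization on a blow-up of $G$. Any path partition of $H$, restricted to the copy of $G_i$, induces some number $s_i$ of maximal subpaths (\emph{segments}); by the interpolation lemma every vector $(s_i)_i$ with $s_i\in[\ham(G_i),|V(G_i)|]$ is attainable. Each segment offers exactly two free endpoints (\emph{ports}) -- a singleton has one vertex with two free ports, a longer segment has two endpoints with one free port each. Because $H=G(G_1,\dots,G_n)$ makes \emph{all} vertices of $G_i$ adjacent to \emph{all} vertices of $G_j$ whenever $\{v_i,v_j\}\in E(G)$, two ports in modules $i$ and $j$ may be linked iff $\{v_i,v_j\}\in E(G)$, and this is the only way segments combine (two segments of the same module are never joined directly). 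Thus a path partition of $H$ corresponds precisely to a linear forest on the blow-up $B(s_1,\dots,s_n)$ obtained from $G$ by replacing $v_i$ by an independent set of $s_i$ segment-nodes and joining two classes completely iff their vertices are adjacent in $G$; the number of resulting paths equals $\sum_i s_i$ minus the number of links used. Consequently $\ham(H)=\min_{s_i\in[\ham(G_i),|V(G_i)|]}\mu\bigl(B(s_1,\dots,s_n)\bigr)$, where $\mu$ is the minimum path-cover number.

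Finally I would encode this minimum as an integer linear program whose variables are the $s_i$ together with one link-multiplicity $y_{ij}$ per edge $\{v_i,v_j\}\in E(G)$, so that the number of variables is at most $n+\binom{n}{2}=O(\mw(G)^2)$, with objective $\min\bigl(\sum_i s_i-\sum_{\{v_i,v_j\}\in E(G)}y_{ij}\bigr)$. The range constraints $\ham(G_i)\le s_i\le|V(G_i)|$ and the port-availability constraints $\sum_{j}y_{ij}\le 2s_i$ are immediate. Acyclicity -- which forbids spurious cycles so that we genuinely get paths and the objective really equals the number of components -- I would impose by the forest inequalities $\sum_{\{i,j\}\subseteq T}y_{ij}\le\bigl(\sum_{i\in T}s_i\bigr)-1$ for every nonempty $T\subseteq\{1,\dots,n\}$; there are at most $2^n$ of these, which only enters the input size $L$ and hence is harmless for fixed-parameter tractability. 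Invoking Proposition~\ref{pro:ILPF} (with binary search over the objective value, absorbed into the $O^+$ notation) then solves each substitution in time $g(\mw)\cdot\mathrm{poly}$, yielding the FPT bound. \textsc{Hamiltonian Path} is the special case ``$\ham(H)=1$?'', and \textsc{Hamiltonian Cycle} is handled by the analogous ILP that instead forces the links to realize a single connected $2$-regular (spanning cycle) structure.

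The hard part will be the realizability lemma underpinning the ILP: I must show that a choice of $(s_i,y_{ij})$ satisfying the degree bounds and the $2^n$ forest inequalities can always be \emph{distributed} over actual segment-nodes as a linear forest with exactly $\sum_i s_i-\sum y_{ij}$ components, and conversely that every path partition satisfies these inequalities. The necessity direction is the standard fact that edges inside any node set of a forest are at most (node count)$-1$; the subtle direction is sufficiency, since naive port usage can be forced into cycles -- already for $G=K_2$ one cannot use all $2\min(s_1,s_2)$ ports, as a $2$-regular bipartite graph is a disjoint union of cycles, and the true maximum is $\min\bigl(2\min(s_1,s_2),\,s_1+s_2-1\bigr)$. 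Establishing this sufficiency (most cleanly by a matroid-union or augmenting-type argument that distributes the prescribed class-to-class multiplicities acyclically) is the crux; once it is in place, the $O(\mw^2)$ variable count and Proposition~\ref{pro:ILPF} give fixed-parameter tractability.
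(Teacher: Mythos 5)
Your overall architecture---dynamic programming on the parse-tree with record $(\ham(G_i),|V(G_i)|)$, the interpolation observation that every segment count in the interval $[\ham(G_i),|V(G_i)|]$ is attainable, and an ILP with a parameter-bounded number of variables at each substitution node---matches the paper, but your ILP is genuinely different, and the difference is exactly where your argument has a hole. You model a path partition of $H$ as an undirected linear forest on a blow-up of $G$ and enforce acyclicity through the exponentially many forest inequalities $\sum_{\{i,j\}\subseteq T}y_{ij}\le\sum_{i\in T}s_i-1$; as you yourself note, the whole proof then rests on a realizability lemma stating that any $(s_i,y_{ij})$ satisfying the degree bounds and these inequalities can be distributed over the segment-nodes as an actual linear forest with $\sum_i s_i-\sum y_{ij}$ components. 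You do not prove this lemma, and it is not routine: it is in essence a Nash-Williams-type detachment theorem (an acyclic detachment with prescribed vertex multiplicities and an additional maximum-degree-two requirement), and your own $K_2$ discussion already shows that naive port counting gives the wrong bound. Gesturing at ``a matroid-union or augmenting-type argument'' for the single step on which everything depends leaves the proof incomplete; the \textsc{Hamiltonian Cycle} variant you sketch at the end has the same problem in aggravated form, since there the detachment must additionally be connected.

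It is worth seeing how the paper sidesteps this difficulty entirely. It first reduces \textsc{Partitioning Into Paths} to \textsc{Hamiltonian Cycle} by adjoining $l$ universal vertices (Proposition~\ref{pro:hp-hc}), and then writes an ILP over \emph{directed} link multiplicities $e_{ij},e_{ji}$ with balance constraints (in-degree equals out-degree at each module), the window $\ham(G_i)\le\sum_j e_{ij}\le|V(G_i)|$, and one cut constraint per bipartition of $V(G)$ (Lemma~\ref{lem:hamiltonian-cycle-and-ILP}). Sufficiency then becomes easy: a balanced, connected directed multigraph admits an Eulerian tour, and threading that tour through a partition of each $G_i$ into exactly $\sum_j e_{ij}$ paths stitches everything into a single Hamiltonian cycle---no detachment lemma is needed and spurious cycles cannot arise. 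To complete your version you must either actually prove the detachment lemma (it is plausibly true, but it is the hard part of the argument, not a remark) or switch to an orientation-plus-Eulerian-tour certificate as the paper does.
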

As outlined above we only need to bound the time spend to compute a
record for a node of type (O4) of the parse-tree. In the case
of \textsc{Partitioning Into Path} a record of a graph $G$ is the pair
$(\ham(G),|V(G)|)$. Hence, we will have shown the theorem after
showing the following lemma. From now on we will assume that
$G$ is a graph with vertices $v_1,\dots,v_n$, $G_1,\dotsc,G_n$
are graphs, $H=G(G_1,\dotsc,G_n)$, and $m=|E(G)|$. 
\begin{LEM}\label{lem:hamiltonian-of-product}\sloppypar
  Given the graph $G$ and the pairs 
  $(\ham(G_1),|V(G_1)|),\dotsc,(\ham(G_n),|V(G_n)|)$ the pair
  $(\ham(H),|V(H)|)$ can be computed in time 
  $O^+\left(\ham\left(H\right)\left((m+n)2^{n} + n\left(2(m+n)\right)^{5(m+n)+o(m+n)}\right)\right)$
  % $O^+\left(\ham\left(H\right)m\left(2^{n} + n\left(2m\right)^{5m+o(m)}\right)\right)$.
  % $O(\ham(H) \cdot(s + (2m)^{5m+o(m)} \cdot \log s))$ where
  % $s=m(2^{|V(G)|+1}+\log(\max_{1 \leq i \leq
  %   n}|V(G_i)|))$ and $m=|E(G)|+|V(G)|$.
\end{LEM}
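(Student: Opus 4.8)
The plan is to reduce the computation of $\ham(H)$ to a bounded‑variable integer program by exploiting that, through the substitution operation, every module $G_i$ interacts with the rest of the graph only through complete joins. The first step is to pin down the only information we are allowed to use about each child, namely the pair $(\ham(G_i),|V(G_i)|)$. I would show that this pair determines exactly the set of integers $p_i$ for which $G_i$ admits a partition into \emph{precisely} $p_i$ paths: $G_i$ can be split into $p_i$ paths for every $p_i$ with $\ham(G_i)\le p_i\le |V(G_i)|$, since one may start from an optimal path partition and repeatedly cut a path in two, raising the count by one until $|V(G_i)|$ singletons remain. Thus the internal behaviour of module $i$ collapses to a single choice of a number $p_i$ in the interval $[\ham(G_i),|V(G_i)|]$.

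The second step is a purely combinatorial description of a path partition of $H$. Having decided to break $G_i$ into $p_i$ segments, the only extra edges available are the complete joins between modules $i,j$ with $\{v_i,v_j\}\in E(G)$, which allow one to glue an endpoint of a segment in module $i$ to an endpoint of a segment in module $j$. Writing $y_{ij}$ for the number of gluings performed across the edge $\{v_i,v_j\}$, the segments together with the gluings form an auxiliary graph whose connected components are exactly the paths of $H$; hence the number of paths equals $\sum_i p_i-\sum_{\{v_i,v_j\}\in E(G)}y_{ij}$, \emph{provided} this auxiliary graph is a linear forest (maximum degree two and acyclic). Because the segments inside a module are interchangeable and the joins between adjacent modules are complete, I would prove that a vector $((p_i),(y_{ij}))$ is realizable as a linear forest precisely when $\sum_j y_{ij}\le 2p_i$ for every $i$ (each segment offers two endpoints) together with an acyclicity condition equivalent to requiring, in every connected component $K$ of the support multigraph on the modules, that the number of gluings be at most the number of segments in $K$ minus one.

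With this characterization in hand, $\ham(H)$ is the minimum of $\sum_i p_i-\sum_{ij}y_{ij}$ over integer vectors $(p_i)_{i\le n}$ and $(y_{ij})_{\{v_i,v_j\}\in E(G)}$ subject to the interval constraints on the $p_i$, the degree constraints $\sum_j y_{ij}\le 2p_i$, and acyclicity. Since there are only $O(m+n)$ variables, I would solve this using Proposition~\ref{pro:ILPF}: iterate over a target value $P=1,2,\dots$, stopping at the first $P$ for which the system augmented with $\sum_i p_i-\sum_{ij}y_{ij}=P$ is feasible; this first feasible value is $\ham(H)$ and contributes the leading $\ham(H)$ factor. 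The degree and interval constraints are linear, so the term $n\,(2(m+n))^{5(m+n)+o(m+n)}$ is the cost of the $O(n)$ feasibility calls on $O(m+n)$ variables per iteration, while the additive $(m+n)2^n$ term accounts for the combinatorial bookkeeping over subsets of the $n$ modules used to fix the component/acyclicity structure before invoking the integer program.

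The step I expect to be the main obstacle is exactly the acyclicity requirement, since it is the only condition that is not transparently linear. A naive maximization of $\sum_{ij}y_{ij}$ subject to the degree bounds alone would happily return a $2$‑regular auxiliary graph, i.e.\ a union of cycles, which is not a valid path partition at all; the smallest witness is a triangle of three modules each contributing a single segment, where the degree bounds permit three gluings but any realization is a forbidden $3$‑cycle. The technical heart of the argument is therefore the combinatorial lemma that, given the degree bounds, an acyclic realization exists if and only if each connected component of the module multigraph uses at most its number of segments minus one gluings, and the correct encoding of this per‑component bound (via the enumeration over subsets of modules responsible for the $2^n$ factor) is what forces the integer program to compute $\ham(H)$ rather than a cyclic relaxation.
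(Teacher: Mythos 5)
Your proposal is correct in substance but follows a genuinely different route from the paper. The paper does not model path partitions directly: it first reduces \textsc{Partitioning Into Paths} to \textsc{Hamiltonian Cycle} by adding $l$ universal vertices (Proposition~\ref{pro:hp-hc}, $\ham(H)$ is the least $l$ such that $H\oplus l=G'(G_1,\dots,G_n,I_l)$ has a Hamiltonian cycle), and then writes an \ILP{} for Hamiltonian cycle of a substituted graph (Lemma~\ref{lem:hamiltonian-cycle-and-ILP}) with directed variables $e_{ij},e_{ji}$, flow balance at each module, the same capacity window $[\ham(G_i),|V(G_i)|]$ you identify, and exponentially many \emph{cut} constraints forcing connectivity; correctness follows from a single Eulerian circuit of the resulting multigraph, threaded through a path partition of each module. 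Your formulation instead optimizes $\sum_i p_i-\sum y_{ij}$ directly, which replaces ``connected and balanced, hence one closed tour'' by ``acyclic, hence a linear forest,'' and your key realizability lemma (degree bounds plus the per-component bound ``gluings $\le$ segments $-1$'') is true; it is proved by essentially the same Eulerian argument, decomposing each support component into open trails (one trail per pair of odd-degree modules, each module visited $\lceil\deg(i)/2\rceil\le p_i$ times) and, when a component is Eulerian, breaking the closed trail at a module with a spare copy, which exists precisely because of your component bound. Both formulations have $O(m+n)$ variables and an \ILP{} of size $O^+((m+n)2^n)$, and both pay the outer $\ham(H)$ factor by iterating over the target value, so the claimed running time comes out the same. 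What the paper's detour buys is that connectivity cuts are a completely standard device and the single-circuit argument avoids any case analysis over forests; what your route buys is that it dispenses with Proposition~\ref{pro:hp-hc} and makes the convexity of the per-module records fully explicit.

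One detail of your encoding needs repair. You cannot ``fix the component/acyclicity structure before invoking the integer program'' by enumerating subsets, because the components of the support multigraph are determined by which $y_{ij}$ are positive, i.e.\ by the solution itself (and enumerating candidate partitions of the $n$ modules would cost a Bell number, not $2^n$). The fix is to impose, \emph{inside} the \ILP{}, the constraint $\sum_{\{i,j\}\subseteq S}y_{ij}\le\sum_{i\in S}p_i-1$ for every nonempty $S\subseteq\{1,\dots,n\}$: each such constraint is necessary (the realized linear forest restricted to the copies of modules in $S$ is a forest), and the family taken together implies your per-component bound, hence suffices. This yields $2^n+O(n)$ linear constraints on $O(m+n)$ variables, exactly mirroring the role of constraint~(4) in Lemma~\ref{lem:hamiltonian-cycle-and-ILP}, and the running time analysis then goes through as you state.
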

The remainder of this section is devoted to a proof of this lemma.

For a graph $G$ and an integer $i$ we define the
graph $G \oplus i$ as the graph with vertex set $V(G) \cup \{1,\dotsc,i\}$
and edge set $E(G) \cup \SB \{v,j\} \SM v \in V(G) \textup{ and }
1\leq j \leq i\SE$, i.e., the graph $G \oplus i$ is obtained from $G$ by
adding $i$ vertices and connect them to every
vertex in $G$.
\shortversion{
  \begin{PRO}[$\star$]\label{pro:hp-hc}
    Let $G$ be a graph and \\$h(G) = \min \SB i \SM G \oplus i \textup{ has a
      Hamiltonian cycle} \SE$. Then $\ham(G)=h(G)$.
  \end{PRO}
}
\longversion{
  \begin{PRO}\label{pro:hp-hc}
    Let $G$ be a graph and \\$h(G) = \min \SB i \SM G \oplus i \textup{ has a
      Hamiltonian cycle} \SE$. Then $\ham(G)=h(G)$.
  \end{PRO}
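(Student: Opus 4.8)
The plan is to prove the two inequalities $h(G)\le\ham(G)$ and $\ham(G)\le h(G)$ separately; the single idea driving both is that the $i$ added vertices of $G\oplus i$ are adjacent to all of $V(G)$ but pairwise non-adjacent, so in any cycle they can only appear as isolated ``connectors'' separating runs of $G$-vertices. Throughout I would consider $i\ge 1$, which is the relevant range since $\ham(G)\ge 1$ for every nonempty $G$.

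For $h(G)\le\ham(G)$, I would start from an optimal partition of $G$ into $p=\ham(G)$ vertex-disjoint paths $P_1,\dots,P_p$, where $P_j$ has endpoints $a_j$ and $b_j$ (possibly $a_j=b_j$). Adding the $p$ new universal vertices $u_1,\dots,u_p$ of $G\oplus p$, I form the closed walk that traverses $P_1$ from $a_1$ to $b_1$, steps to $u_2$, traverses $P_2$, and so on cyclically, using $u_{j}$ to pass from $b_{j-1}$ to $a_{j}$ and closing up through $u_1$ from $b_p$ back to $a_1$. Every intra-path edge is an edge of $G$, and every step onto or off a $u_j$ exists because $u_j$ is adjacent to all of $V(G)$; since the paths are disjoint and cover $V(G)$ and each $u_j$ is used exactly once, this is a Hamiltonian cycle of $G\oplus p$, whence $h(G)\le p$.

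For $\ham(G)\le h(G)$, I would fix $i=h(G)$ together with a Hamiltonian cycle $C$ of $G\oplus i$. Because the $i$ added vertices form an independent set, no two of them are consecutive on $C$, so deleting them breaks $C$ into exactly $i$ arcs, each a nonempty contiguous run of $G$-vertices. Each arc is a path of $G$ (its edges lie between two $G$-vertices, and $G\oplus i$ induces exactly $G$ on $V(G)$), the arcs are pairwise vertex-disjoint, and together they cover $V(G)$; this is a partition of $G$ into $i$ paths, so $\ham(G)\le i$. Combining the two bounds yields $\ham(G)=h(G)$.

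The step I expect to need the most care is the counting in the second direction: one must argue that deleting the $i$ added vertices yields \emph{exactly} $i$ nonempty arcs rather than fewer, and this is precisely where their mutual non-adjacency (hence non-consecutiveness on $C$) is used to rule out empty gaps. The only genuinely degenerate situation is $|V(G)|\le 1$ (for instance a single vertex, for which $G\oplus i$ is a star and is never Hamiltonian); such base cases fall outside the range in which the correspondence is meaningful and would be handled directly in the dynamic program, so I would read the proposition as applying to $|V(G)|\ge 2$.
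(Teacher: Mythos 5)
Your proof is correct and follows essentially the same route as the paper's: build a Hamiltonian cycle of $G\oplus \ham(G)$ by interleaving the paths of an optimal partition with the universal vertices, and conversely delete the added vertices from a Hamiltonian cycle of $G\oplus h(G)$ to recover a path partition. The extra care you take (non-consecutiveness of the added vertices, the degenerate case $|V(G)|\le 1$) only sharpens the paper's ``at most $h(G)$ paths'' statement and does not change the argument.
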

  \begin{proof}
    We first show that $h(G) \leq \ham(G)$. Let $\{P_1, \dots,
    P_{\ham(G)}\}$ be a partition of $G$ into paths. Then $G \oplus \ham(G)$
    contains the Hamiltonian cycle
    $1$, $P_1$, $2$, $P_2, \dots,\ham(G)$, $P_{\ham(G)}$, $1$, as required.
    It remains to show that $\ham(G) \leq h(G)$. Let $C$ be a
    Hamiltonian cycle in $G\oplus h(G)$. Then $C \setminus \{1, \dots,h(G)\}$
    is a partition of $G$ into at most $h(G)$ disjoint path, as required.
    \shortversion{\qed}\end{proof}
}
A slightly less general version of the following lemma has already been
proven in~\cite{Lampis12}. 
\begin{LEM}\label{lem:hamiltonian-cycle-and-ILP}
  Let
  \textsc{Hamiltonian Cycle} be the \ILP{} with variables 
  $\SB e_{ij},e_{ji} \SM \{v_i,v_j\} \in E(G) \SE$ and constraints:
  \begin{tabbing}
    (1) $\sum_{j \in \SB l \SM v_l \in N_G(v_i) \SE}e_{ij}=\sum_{j \in
      \SB l \SM v_l \in N_G(v_i) \SE}e_{ji}$ \= \kill

    For every $1 \leq i \leq n$:\\

    (1) $\sum_{j \in \SB l \SM v_l \in N_G(v_i) \SE}e_{ij}=\sum_{j \in
      \SB l \SM v_l \in N_G(v_i) \SE}e_{ji}$ \> (``incoming = outgoing'')\\

    (2) $\sum_{j \in \SB l \SM v_l \in N_G(v_i) \SE}e_{ij} \leq
    |V(G_i)|$ \> (at most $|V(G_i)|$)\\

    (3) $\ham(G_i) \leq \sum_{j \in \SB l \SM v_l \in N_G(v_i)
      \SE}e_{ij}$  \> (at least $\ham(G_i)$)\\

    For every partition of $V(G)$ into vertex sets $A$ and $B$:\\

    (4) $\sum_{1 \leq i < j \leq n \SM \{v_i,v_j\} \in E(G) \land |e
      \cap A|=1}e_{ij}+e_{ji}\geq 1$ \> (``connectivity'')\\

    For every variable $e_{ij}$:\\
    
    (5) $e_{ij} \geq 0$.\\
  \end{tabbing}
  Then $H$ has a Hamiltonian cycle if and only if the ILP
  \textsc{Hamiltonian Cycle} is feasible.
  Furthermore, the size of the \ILP{} is at most
  $O^+(m2^{n})$ 
%  $O(|E(G)|(2^{|V(G)|}+\log(\max_{1 \leq i \leq m}|V(G_i)|)))$ 
  and it has $2m$ variables.
\end{LEM}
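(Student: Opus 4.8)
The plan is to set up a correspondence between Hamiltonian cycles of $H$ and feasible integer solutions of the ILP, in which the value $e_{ij}$ records how many \emph{external} edges (edges joining two distinct modules $G_i$ and $G_j$) are traversed from module $i$ to module $j$ by a fixed orientation of the cycle. The structural fact driving everything is that an oriented Hamiltonian cycle of $H$, read module by module, alternates between maximal path-segments lying inside a single module $G_i$ and single external edges crossing between modules. Moreover, since the substitution operation makes the join between $V(G_i)$ and $V(G_j)$ complete whenever $\{v_i,v_j\}\in E(G)$, the only thing that matters about an external edge is the ordered pair of modules it connects, which is exactly the information the variables $e_{ij}$ retain.

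For the forward direction I would fix an orientation of a given Hamiltonian cycle and set $e_{ij}$ to the number of external edges directed from $G_i$ to $G_j$. Writing $p_i=\sum_j e_{ij}$ for the number of times the cycle enters module $i$, the path-segments of the cycle inside $G_i$ form a partition of $V(G_i)$ into exactly $p_i$ paths. Constraint (1) then holds because entries and exits of a cycle at $G_i$ match; constraint (2) because $p_i$ disjoint nonempty paths occupy at most $\Card{V(G_i)}$ vertices; constraint (3) because a partition into $p_i$ paths witnesses $\ham(G_i)\le p_i$; constraint (4) because a single cycle must cross every nontrivial vertex bipartition of $G$ at least once; and (5) is immediate.

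The reverse direction is where the real work lies, and I expect it to be the main obstacle. Given a feasible solution, I would form the directed multigraph $D$ on $\{v_1,\dots,v_n\}$ with $e_{ij}$ arcs from $v_i$ to $v_j$. Constraint (1) says $D$ is balanced (equal in- and out-degree at every node); taking the singleton cut $A=\{v_i\}$ in (4) forces $2p_i\ge 1$, so every module has positive degree, while the general case of (4) is precisely the cut characterization of connectedness of the underlying multigraph. Hence $D$ is connected and balanced and admits an Eulerian circuit visiting each $v_i$ exactly $p_i:=\sum_j e_{ij}$ times. The second ingredient is a routing step inside each module: since $\ham(G_i)\le p_i\le\Card{V(G_i)}$ by (2)--(3), $G_i$ can be partitioned into \emph{exactly} $p_i$ vertex-disjoint paths, obtained by starting from an optimal $\ham(G_i)$-path partition and repeatedly splitting a path until the count reaches $p_i$. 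Finally I would stitch the global cycle together by traversing the Eulerian circuit of $D$ and, each time it passes through $v_i$, inserting one of the $p_i$ path-segments of $G_i$ (assigned bijectively to the passes), joining consecutive segments by external edges. Completeness of the inter-module joins guarantees that the required edge between the exit vertex of one segment and the entry vertex of the next always exists in $H$, so the result is a single closed walk covering every vertex of $H$ exactly once, i.e.\ a Hamiltonian cycle.

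It remains to read off the size bounds. There is one pair of variables $e_{ij},e_{ji}$ per edge of $G$, giving $2m$ variables. The constraints of types (1)--(3) and (5) are polynomially many and each of size $O(m)$, whereas type (4) contributes one inequality per bipartition of $V(G)$, i.e.\ $O(2^n)$ constraints, each summing over $O(m)$ edges; this term dominates and yields total encoding size $O^+(m2^n)$, as claimed.
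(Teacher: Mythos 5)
Your proposal is correct and follows essentially the same route as the paper: the forward direction counts oriented crossing edges between modules, and the reverse direction builds the balanced, connected directed multigraph on $V(G)$, extracts an Eulerian circuit, partitions each $G_i$ into exactly $p_i$ paths (justified by constraints (2)--(3)), and stitches the segments together along the circuit using the completeness of the inter-module joins. The only difference is cosmetic: you spell out the path-splitting argument for obtaining exactly $p_i$ paths, which the paper leaves implicit.
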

\begin{proof}
  The size bound on the \ILP{} \textsc{Hamiltonian Cycle} is obvious.
  Suppose that $H$ has a Hamiltonian cycle $C$. W.l.o.g. we can assume
  that $C$ is directed. For every $\{v_i,v_j\} \in E(G)$ we
  set $e_{ij}$ to be the number of arcs $(x,y)$ in $C$ such that $x
  \in V(G_i)$ and $y \in V(G_j)$ and similarly we set $e_{ji}$ to be
  the number of arcs $(x,y)$ in $C$ such that $x \in V(G_j)$ and $y
  \in V(G_i)$. Then, because $C$ is a Hamiltonian cycle of $H$, this
  assignment of $e_{ij}$ and $e_{ji}$
  satisfies the constrains (1)--(5), as required.  

  For the reverse direction, suppose that the \ILP{}
  \textsc{Hamiltonian Cycle} is feasible and let $\beta$ be an
  assignment of the variables $e_{ij}$ and $e_{ji}$ witnessing this.
  Let $G'$ be the directed multigraph obtained from $G$ by replacing
  every edge $\{v_i,v_j\}$ with $\beta(e_{ij})$ parallel arcs from $v_i$ to
  $v_j$ and $\beta(e_{ji})$ parallel arcs from $v_j$ to $v_i$. Because of the
  constrains (1), (4), and (5), it follows that $G'$ contains a directed
  eularian tour $T$, i.e., a closed directed walk that visits all the
  arcs of $G'$ exactly once. Clearly, when fixing any vertex of $G'$,
  the tour $T$ defines an ordering of the arcs of $G'$. Let $\pi$ be
  any such ordering of the arcs of $G'$.
  For every $1 \leq i \leq n$, let
  $\mathcal{P}_i=(P^i_1,\dots,P^i_{p_i})$
  be a partition of $G_i$ into $p_i$ disjoint paths, 
  where $p_i=\sum_{j \in \SB l \SM v_l \in N_G(v_i) \SE}e_{ij}$. Because of the
  constrains (2) and (3) we know that such a partition exists for every
  $1 \leq i \leq n$. For every arc $a=(v_i,v_j)$ in $T$ where $a$ is the $l$-th arc
  leaving $v_i$ in $T$ and $a$ is
  the $l'$-th arc entering $v_j$ in $T$ (according to the ordering
  $\pi$), we denote by $e(a)$ the edge of $H$ from the second endpoint of
  $P^i_l$ to the first endpoint of $P^j_{l'}$. Then the edges in $\SB
  e(a) \SM a \in T \SE$ together with the edges of all the path
  $P^1_1,\dots,P^n_{p_n}$ form a Hamiltonian cycle in $H$, as required.
 \shortversion{\qed}\end{proof}
\begin{LEM}\label{lem:hamiltonian-cycle-of-product}\sloppypar
  Given the graph $G$ and the pairs 
  $(\ham(G_1),|V(G_1)|),\dotsc,(\ham(G_n),|V(G_n)|)$ it can be decided
  whether the graph $H$ has a Hamiltonian cycle in time 
  $O^+(m2^{n} + (2m)^{5m+o(m)}n)$.
  % $O(s + (2|E(G)|)^{5|E(G)|+o(|E(G)|)} \cdot \log s)$ where
  % $s=|E(G)|(2^{|V(G)|}+\log(\max_{1 \leq i \leq n}|V(G_i)|))$.
 \end{LEM}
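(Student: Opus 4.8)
The plan is to combine the \ILP{} characterization of Lemma~\ref{lem:hamiltonian-cycle-and-ILP} with the fixed-parameter tractability of \textsc{Integer Linear Programming Feasibility} from Proposition~\ref{pro:ILPF}. First I would observe that the data provided, namely the graph $G$ together with the pairs $(\ham(G_i),|V(G_i)|)$, is exactly what is needed to write down the \ILP{} \textsc{Hamiltonian Cycle}: the numbers $|V(G_i)|$ and $\ham(G_i)$ enter only as the right-hand sides of the upper-bound constraints (2) and the lower-bound constraints (3), while constraints (1), (4) and (5) depend solely on $G$. By Lemma~\ref{lem:hamiltonian-cycle-and-ILP} this \ILP{} is feasible if and only if $H$ has a Hamiltonian cycle, so deciding its feasibility decides the Hamiltonicity of $H$; this settles correctness immediately, and the whole content of the lemma is the running-time claim.

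For the running time I would split the work into two phases. The first phase builds the \ILP{}; by the size bound of Lemma~\ref{lem:hamiltonian-cycle-and-ILP} this costs $O^+(m2^{n})$, the bottleneck being the $2^n$ connectivity constraints (4), one per partition of $V(G)$. This accounts for the first summand. The second phase runs the feasibility algorithm of Proposition~\ref{pro:ILPF}. The crucial structural point is that, although the \ILP{} has exponentially many constraints, it has only $2m$ variables. Plugging $p=2m$ into the bound $O(p^{2.5p+o(p)}\cdot L)$ produces the exponential factor $(2m)^{2.5\cdot 2m+o(m)}=(2m)^{5m+o(m)}$, which matches the exponent appearing in the claimed bound, and all coefficients and right-hand sides are bounded by $|V(H)|$, hence each has only poly-logarithmic bit-length.

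The step I expect to be the main obstacle is precisely the running-time bookkeeping, not correctness. A naive application of Proposition~\ref{pro:ILPF} to the full \ILP{} would take the encoding length to be the whole input, $L=O^+(m2^{n})$, and would therefore multiply the exponential solve time by $2^n$, yielding the much larger $(2m)^{5m+o(m)}\cdot m2^{n}$ rather than the stated additive bound. The heart of the argument is thus to ensure that the $2^n$ connectivity constraints contribute only additively, through the construction term $O^+(m2^{n})$, and that the length factor $L$ multiplying the exponential-in-$m$ term stays near-linear in $n$. Concretely, I would exploit that the FPT guarantee of Proposition~\ref{pro:ILPF} is driven by the number of variables $2m$ and by the bit-length of the coefficients (each $O^+(1)$ since bounded by $|V(H)|$), not by the sheer number of constraints; keeping the exponential solve dependent only on these quantities is what separates the cost of \emph{writing} the \ILP{} from the cost of \emph{solving} it and delivers the two additive terms $O^+(m2^{n})$ and $(2m)^{5m+o(m)}n$ of Lemma~\ref{lem:hamiltonian-cycle-of-product}.
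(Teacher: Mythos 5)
Your proposal matches the paper's proof: it constructs the \ILP{} of Lemma~\ref{lem:hamiltonian-cycle-and-ILP} in time $O^+(m2^{n})$ and then invokes Proposition~\ref{pro:ILPF} on an \ILP{} with $p=2m$ variables, arriving at the same two additive terms $O^+(m2^{n})$ and $O^+\bigl((2m)^{5m+o(m)}n\bigr)$. The point you single out as the main obstacle --- keeping the $2^{n}$ connectivity constraints out of the factor multiplying $(2m)^{5m+o(m)}$ --- is treated in the paper by the same (and in fact terser) bookkeeping, namely bounding the solve time by $O^+\bigl((2m)^{5m+o(m)}\log(m2^{n})\bigr)\subseteq O^+\bigl((2m)^{5m+o(m)}n\bigr)$.
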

\begin{proof}\sloppypar
  To decide whether the graph $H$ has a Hamiltonian cycle we construct
  and solve the \ILP{} \textsc{Hamiltonian Cycle} from
  Lemma~\ref{lem:hamiltonian-cycle-and-ILP}. The running time of this
  algorithm is then the time it takes to construct the \ILP{}, i.e.,
  $O^+(m2^{n})$, plus the time needed to solve the \ILP{},
  i.e., $O^+((2m)^{5m+o(m)} \log (m2^{n})) \in O^+((2m)^{5m+o(m)} n)$
  using
  Proposition~\ref{pro:ILPF}. This concludes the
  proof of the Lemma.
\shortversion{\qed}\end{proof}
We are now ready to show Lemma~\ref{lem:hamiltonian-of-product}.
\begin{proof}[Proof of Lemma~\ref{lem:hamiltonian-of-product}]
  Clearly, $|V(H)|=\sum_{1 \leq i \leq n}|V(G_i)|$ so it remains to
  show how to compute $\ham(H)$. Because of
  Proposition~\ref{pro:hp-hc} $\ham(H)$ is equal to the minimum
  positive integer $1 \leq l \leq |V(H)|$ such that the graph $H\oplus
  l$
  has a Hamiltonian cycle. For every $1 \leq l \leq |V(H)|$ the
  graph $H \oplus l$ is equal to the graph $G'(G_1,\dotsc,G_n,I_l)$ where
  $G'$ is the graph obtained from $G$ by adding one vertex $v_{n+1}$ and
  making it adjacent to all vertices of $G$, and the graph $I_l$ is
  the independent set on $l$ vertices. Because
  $\ham(I_l)=|V(I_l)|=l$ we can use
  Lemma~\ref{lem:hamiltonian-cycle-of-product} to decide whether the
  graph $H \oplus l$ has an Hamiltonian cycle in time
  $O(m'2^{n'} + (2m')^{5m'+o(m')}n')$ where $n'=n+1$ and $m'=m+n$. 
  This concludes the proof of the lemma.
\shortversion{\qed}\end{proof}

\section{Conclusion}

We examined some of the algorithmic properties of modular-width, a natural
structural parameter. Our results indicate that this is a notion which may be
worthy of further study independently of its more famous cousin, clique-width,
since it its decreased generality does offer some algorithmic pay-off.

As a direction for further research, it would be interesting to see if more
problems which are hard for clique-width (or even for treewidth) become
tractable for modular-width. Two prime suspects in this category are
\textsc{Edge Dominating Set} and \textsc{Partition into Triangles}.

Beyond that, it would be interesting to investigate if the techniques of this
paper can be further generalized, perhaps eventually leading to
meta-theorem-like results. In particular, our ILP-based solution for
\textsc{Hamiltonicity} may be applicable (with some modifications) to other
problems. One may ask: what properties must a problem possess for us to be able
to give a straightforward DP algorithm that uses ILPs to combine the tables? 

The main property that a problem should satisfy for these ideas to apply is
that the sets of partial solutions arising in the dynamic programming
formulation should be \emph{convex}. Convexity is important here, since we
would like to be able to express the information contained in the DP tables
using linear constraints, in order to use an ILP. Convexity was easy to
establish in the case of \textsc{Partition into Paths} and similar problems,
since the set of feasible partial solutions is the set of integers $k$ such
that there exists a partition of a subgraph into $k$ paths. If one knows the
minimum feasible $k$, all larger integers are also feasible and this is
trivially a convex set. The question then becomes, are there any other natural
problems where convexity can be established (perhaps non-trivially) and used in
this way?

\shortversion{\newpage}

% \bibliographystyle{plain}
% \bibliography{literature}

\end{document}